\def\ba#1\ea{\begin{align*}#1\end{align*}} %
\def\banum#1\eanum{\begin{align}#1\end{align}} %
\newtheorem{lemma}{Lemma} 
\newtheorem{proposition}[lemma]{Proposition} 
\newtheorem{theorem}[lemma]{Theorem}
\newenvironment{proof}{\par\noindent{\em Proof. }}{\hfill$\Box$\\[2mm]}
\newcommand{\RR}{\mathbb{R}}
\newcommand{\Nat}{\mathbb{N}} 
\newcommand{\R}{\RR}
\newcommand{\Lcal}{\mathcal{L}}
\newcommand{\Vcal}{\mathcal{V}}
\newcommand{\Xcal}{\mathcal{X}}
\DeclareMathOperator{\vol}{vol} %
\DeclareMathOperator{\Tr}{Tr}%
\newcommand{\norm}[1][\cdot]{\|#1\|}
\newcommand{\eps}{\ensuremath{\varepsilon}}
\renewcommand{\epsilon}{\ensuremath{\varepsilon}}
\newcommand{\charfct}{\mathbb{1}}
\newcommand{\union}{\;\cup\;}
\newcommand{\blobb}[1]{%
\begin{list}{$\bullet$}{%
\setlength{\topsep}{0cm}
\setlength{\partopsep}{-\parskip}
\setlength{\leftmargin}{0.5cm}
}{\item #1}%
\end{list}
}
\newcommand{\Lun}{L}
\newcommand{\Lsym}{L_{\text{sym}}}
\newcommand{\Lrw}{L_{\text{rw}}}
\renewcommand{\bar}{\overline} %
\DeclareMathOperator{\Ratiocut}{RatioCut}
\DeclareMathOperator{\Ncut}{Ncut}
\DeclareMathOperator{\Cut}{cut}
\DeclareMathOperator{\MinmaxCut}{MinMaxCut}
\newcommand{\ulesalgorithm}[2]{%
\vspace{1cm}
\fbox{\parbox{\textwidth}{%
{\bf #1}\\[2mm]
{\tt #2 }
}}
\vspace{1cm}
}
\begin{document}

\title{A Tutorial on Spectral Clustering}
\author{Ulrike von Luxburg\\
Max Planck Institute for Biological Cybernetics\\
Spemannstr. 38, 72076 T{\"u}bingen, Germany\\
\url{ulrike.luxburg@tuebingen.mpg.de}
}
\date{\small This article appears in Statistics and Computing, 17
  (4), 2007. \\The original publication is available at \url{www.springer.com}.}
\maketitle

\begin{abstract} 
  In recent years, spectral clustering has become one of the most
  popular modern clustering algorithms. It is simple to implement, can
  be solved efficiently by standard linear algebra software, and very
  often outperforms traditional clustering algorithms such as the
  k-means algorithm. On the first glance spectral clustering appears
  slightly mysterious, and it is not obvious to see why it works at
  all and what it really does. The goal of this tutorial is to give
  some intuition on those questions.  We describe different graph
  Laplacians and their basic properties, present the most common
  spectral clustering algorithms, and derive those algorithms from
  scratch by several different approaches.  Advantages and
  disadvantages of the different spectral clustering algorithms are
  discussed.
\end{abstract}

{\bf Keywords: } spectral clustering; graph Laplacian\\

\section{Introduction} \label{sec-intro}

Clustering is one of the most widely used techniques for exploratory data
analysis, with applications ranging from statistics, computer science,
biology to social sciences or psychology. In virtually
every scientific field dealing with empirical data, people attempt to get
a first impression on their data by trying to identify groups of
``similar behavior'' in their data. 
In this article we would like to introduce the reader to the family of
spectral clustering algorithms. Compared to the ``traditional
algorithms'' such as $k$-means or single linkage, spectral clustering
has many fundamental advantages. Results obtained by spectral
clustering often outperform the traditional approaches, spectral
clustering is very simple to
implement and can be solved efficiently by standard linear algebra methods. \\

This tutorial is set up as a self-contained introduction to spectral
clustering. We derive spectral clustering from scratch and present
different points of view to why spectral clustering works.
Apart from basic linear algebra, no particular mathematical background
is required by the reader. However, we do not attempt to give a
concise review of the whole literature on spectral clustering, which
is impossible due to the overwhelming amount of
literature on this subject.  The first two sections are devoted to a
step-by-step introduction to the mathematical objects used by spectral
clustering: similarity graphs in Section \ref{sec-graphs}, and graph
Laplacians in Section \ref{sec-laplacians}. The spectral clustering
algorithms themselves will be presented in Section
\ref{sec-algorithms}.  The next three sections are then devoted to
explaining why those algorithms work. Each section corresponds to one
explanation: Section \ref{sec-graphcuts} describes a graph
partitioning approach, Section \ref{sec-rw} a random walk perspective,
and Section \ref{sec-perturbation} a perturbation theory approach. In
Section \ref{sec-inpractice} we will study some practical issues
related to spectral clustering, and discuss various extensions and
literature related to spectral clustering in Section
\ref{sec-extensions}.

\section{Similarity graphs} \label{sec-graphs}

Given a set of data points $x_1,\hdots x_n$ and some notion of
similarity $s_{ij} \geq 0$ between all pairs of data points $x_i$ and
$x_j$, the intuitive goal of clustering is to divide the data points
into several groups such that points in the same group are similar and
points in different groups are dissimilar to each other. If we do not
have more information than similarities between data points, a nice
way of representing the data is in form of the {\em similarity graph}
$G = (V,E)$. Each vertex $v_i$ in this graph represents a data point
$x_i$.  Two vertices are connected if the similarity $s_{ij}$ between
the corresponding data points $x_i$ and $x_j$ is positive or larger
than a certain threshold, and the edge is weighted by $s_{ij}$.  The
problem of clustering can now be reformulated using the similarity
graph: we want to find a partition of the graph such that the edges
between different groups have very low weights (which means that
points in different clusters are dissimilar from each other) and the
edges within a group have high weights (which means that points within
the same cluster are similar to each other). To be able to formalize
this intuition we first want to introduce some basic graph notation
and briefly discuss the kind of graphs we are going to study.

\subsection{Graph notation}

Let $G = (V,E)$ be an undirected graph with vertex
set $V = \{v_1, \hdots, v_n\}$.
In the following we assume that the graph $G$ is weighted, that is
each edge between two vertices $v_i$ and $v_j$ carries a non-negative weight $w_{ij} \geq
0$. 
The weighted {\em adjacency matrix} of the graph is the matrix $W = (w_{ij})_{i,j =
  1,\hdots,n}$. If $w_{ij} = 0$ this means that the vertices $v_i$ and
$v_j$ are not connected by an edge. As $G$ is undirected we require $w_{ij} = w_{ji}$. 
The degree of a vertex $v_i \in V$ is defined as 
\ba
d_i = \sum_{j = 1}^n w_{ij}.
\ea
Note that, in fact, this sum only runs over all vertices adjacent to $v_i$, as
for all other vertices $v_j$ the weight $w_{ij}$ is 0. The {\em degree matrix} $D$
is defined as the diagonal matrix with the degrees $d_1,\hdots,d_n$ on the diagonal.
Given a subset of vertices $A \subset V$, we denote its complement $V
\setminus A$ by $\bar A$.  We define the indicator vector $\charfct_A
= (f_1, \hdots, f_n)' \in \R^n$ as the vector with entries $f_i = 1$
if $v_i \in A$ and $f_i = 0$ otherwise. For convenience we introduce
the shorthand notation ${i\in A}$ for the set of indices $\{i \;| \;
v_i \in A\}$, in particular when dealing with a sum like $\sum_{i \in
  A} w_{ij}$.  For two not necessarily disjoint sets $A,B \subset V$ we
define 
\ba
W(A,B):= \sum_{i \in A, j \in B} w_{ij}.
\ea
 We consider two different ways of measuring the ``size'' of a
subset $A \subset V$:
\ba 
& |A| := \text{ the number of vertices in } A \\
& \vol(A) := \sum_{i \in A} d_i. 
\ea 
Intuitively, $|A|$  measures the size of $A$ by its number of
vertices, while $\vol(A)$ measures the size of $A$ by summing over the weights
of all edges attached to vertices in $A$. 
A subset $A \subset V$ of a graph is connected if any two vertices in
$A$ can be joined by a path such that all intermediate points also lie
in $A$. A subset $A$ is called a connected component if it is
connected and if there are no connections between vertices in $A$ and
$\bar A$. The nonempty sets $A_1, \hdots, A_k$ form a partition of the
graph if $A_i \cap A_j = \emptyset$ and $A_1 \union \hdots \union A_k
= V$.

  \subsection{Different similarity graphs} \label{ssec-sim-graphs}

  There are several popular constructions to transform a given set
  $x_1,\hdots,x_n$ of data points with pairwise similarities $s_{ij}$
  or pairwise distances $d_{ij}$ into a graph. When constructing
  similarity graphs the goal is to model the local neighborhood
  relationships between the data points. \\

  {\bf The $\eps$-neighborhood graph:} Here we connect all points
  whose pairwise distances are smaller than $\eps$.  As the distances
  between all connected points are roughly of the same scale (at most
  $\eps$), weighting the edges would not incorporate more information
  about the data to the graph. Hence, the $\eps$-neighborhood graph is
  usually considered as an unweighted graph. \\

{\bf $k$-nearest neighbor graphs:} Here the goal is to connect vertex
$v_i$ with vertex $v_j$ if $v_j$ is among the $k$-nearest neighbors of
$v_i$. However, this definition leads to a directed graph, as the
neighborhood relationship is not symmetric. There are two ways of
making this graph undirected. The first way is to simply ignore the
directions of the edges, that is we connect $v_i$ and $v_j$ with an
undirected 
edge if $v_i$ is among the $k$-nearest neighbors of $v_j$ {\em or} if
$v_j$ is among the $k$-nearest neighbors of $v_i$. The resulting graph
is what is usually called {\em the $k$-nearest neighbor graph}. The
second choice is to connect vertices $v_i$ and $v_j$ if both $v_i$ is
among the $k$-nearest neighbors of $v_j$ {\em and} $v_j$ is among the
$k$-nearest neighbors of $v_i$. The resulting graph is called the {\em
  mutual $k$-nearest neighbor graph}.  In both cases, after connecting
the appropriate vertices we weight the edges by the similarity of their endpoints. \\

{\bf The fully connected graph:} Here we simply connect all points
with positive similarity with each other, and we weight all edges by
$s_{ij}$. As the graph should represent the local neighborhood
relationships, this construction is only useful if the
similarity function itself models local neighborhoods.  An example for
such a similarity function is the Gaussian similarity function
$s(x_i,x_j) = \exp(- {\norm[x_i - x_j]^2}/({2 \sigma^2}))$, where
the parameter $\sigma$ controls the width of the neighborhoods. This
parameter plays a similar role as the parameter $\eps$ in case
of the $\eps$-neighborhood graph. \\

All graphs mentioned above are regularly used in spectral
clustering. To our knowledge, theoretical results on the question how
the choice of the similarity graph influences the spectral clustering
result do not exist. For a discussion of the behavior of
the different graphs we refer to Section \ref{sec-inpractice}.\\

\section{Graph Laplacians and their basic properties} \label{sec-laplacians}

The main tools for spectral clustering are graph Laplacian
matrices.  There exists a whole field dedicated to the study of those
matrices, called spectral graph theory (e.g., see \citeNP{Chung97}). In
this section we want to define different graph Laplacians and point
out their most important properties. We will carefully distinguish
between different variants of graph Laplacians. Note that in the
literature there is no  unique convention which matrix exactly is
called ``graph Laplacian''. 
Usually, every author just calls ``his'' matrix the graph
Laplacian. Hence, a lot of care is needed when reading literature on
graph Laplacians.\\

In the following we always assume that $G$ is an undirected, weighted
graph with weight matrix $W$, where $w_{ij} = w_{ji} \geq 0$. When
using eigenvectors  of a matrix, we will not necessarily assume that
they are normalized. For example, the constant vector
$\charfct$ and a multiple $a \charfct$ for some $a \neq 0$ will be
considered as the same eigenvectors. Eigenvalues will always be
ordered increasingly, respecting multiplicities. By ``the first $k$
eigenvectors'' we refer to the eigenvectors corresponding to the $k$
smallest eigenvalues. 

\subsection{The unnormalized graph Laplacian} \label{ssec-lun}

The unnormalized graph Laplacian matrix is defined as
\ba
\Lun = D - W.
\ea
An overview over many of its
properties can be found in \citeA{Mohar91,Mohar97}.  The following
proposition summarizes the most important facts needed for spectral
clustering.

\begin{proposition}[Properties of $\Lun$] 
\label{prop-unnorm}
The matrix $\Lun$ satisfies the following properties: 
\begin{enumerate}
\item For every vector $f \in \R^n$ we have 
\ba
f'\Lun f = \frac{1}{2} \sum_{i,j=1}^n w_{ij} (f_i - f_j)^2.
\ea

\item $\Lun$ is symmetric and positive semi-definite.

\item The smallest eigenvalue of $\Lun$ is 0, the corresponding
  eigenvector is the constant one vector $\charfct$. 

\item $\Lun$ has $n$ non-negative, real-valued eigenvalues $0 =
  \lambda_1 \leq \lambda_2 \leq \hdots \leq \lambda_n$.

\end{enumerate}
\end{proposition}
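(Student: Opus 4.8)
The plan is to prove the four properties in the order stated, since each one feeds into the next.

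First I would establish property (1), the quadratic form identity, by a direct computation. Writing $f'\Lun f = f'Df - f'Wf = \sum_i d_i f_i^2 - \sum_{i,j} w_{ij} f_i f_j$, I would substitute $d_i = \sum_j w_{ij}$ in the first sum and then symmetrize: the key trick is to write $\sum_i d_i f_i^2 = \tfrac12\sum_{i,j} w_{ij} f_i^2 + \tfrac12 \sum_{i,j} w_{ij} f_j^2$ (using $w_{ij}=w_{ji}$ for the second copy). Combining the three sums gives $\tfrac12 \sum_{i,j} w_{ij}(f_i^2 - 2 f_i f_j + f_j^2) = \tfrac12 \sum_{i,j} w_{ij}(f_i-f_j)^2$. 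This is just bookkeeping, so I would present it compactly.

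Property (2) then follows almost immediately. Symmetry of $\Lun = D - W$ is clear since $D$ is diagonal and $W$ is symmetric by assumption. Positive semi-definiteness is exactly the statement that $f'\Lun f \geq 0$ for all $f$, and the right-hand side of the identity in (1) is a sum of non-negative terms (each $w_{ij} \geq 0$ and each $(f_i-f_j)^2 \geq 0$), hence non-negative. For property (3), I would check directly that $\Lun \charfct = (D-W)\charfct = 0$, since the $i$-th entry of $W\charfct$ is $\sum_j w_{ij} = d_i$, which cancels the $i$-th entry $d_i$ of $D\charfct$; thus $0$ is an eigenvalue with eigenvector $\charfct$. That $0$ is the \emph{smallest} eigenvalue follows from (2): a positive semi-definite matrix has all eigenvalues $\geq 0$, and we have exhibited $0$ as one of them. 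Finally, property (4) is a summary consequence: $\Lun$ is a real symmetric matrix, so by the spectral theorem it has $n$ real eigenvalues (counted with multiplicity) and an orthonormal eigenbasis; by (2) these eigenvalues are non-negative; by (3) the smallest is $0$. Ordering them increasingly gives $0 = \lambda_1 \leq \lambda_2 \leq \cdots \leq \lambda_n$.

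I do not expect any real obstacle here — the only mild subtlety is making the symmetrization step in (1) transparent, and being careful that in (4) one is invoking the spectral theorem for real symmetric matrices (which guarantees real eigenvalues and diagonalizability) rather than asserting anything deeper. Everything else is a short deduction chain from the identity in (1).
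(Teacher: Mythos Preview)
Your proposal is correct and follows essentially the same approach as the paper: the paper proves (1) by the identical direct expansion and symmetrization, derives (2) from symmetry of $D,W$ and non-negativity of the sum in (1), calls (3) ``obvious'', and notes (4) as a direct consequence of (1)--(3). Your write-up is slightly more explicit in (3) and (4) (checking $L\charfct=0$ componentwise and naming the spectral theorem), but the underlying argument is the same.
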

\begin{proof}\\
Part (1): By the definition of $d_i$, 
\ba
& f' \Lun f = f'D f - f'W f = 
\sum_{i=1}^n d_i f_i^2 - \sum_{i,j=1}^n f_i f_j w_{ij} \\
& = \frac{1}{2}\left( \sum_{i=1}^n d_i f_i^2 - 2 \sum_{i,j=1}^n f_i f_j w_{ij} +
  \sum_{j=1}^n d_j f_j^2 \right)
=  \frac{1}{2} \sum_{i,j=1}^n w_{ij} (f_i - f_j)^2.
\ea
Part (2): The symmetry of $\Lun$ follows directly from the symmetry of
$W$ and $D$. The positive semi-definiteness is a direct consequence of
Part (1), which shows that $f'\Lun f \geq 0$ for all $f \in \R^n$. \\
Part (3): Obvious.\\
Part (4) is a direct consequence of Parts (1) - (3). 
\end{proof}

Note that the unnormalized graph Laplacian does not depend on the
diagonal elements of the adjacency matrix $W$. Each adjacency matrix which
coincides with $W$ on all off-diagonal positions leads to the same
unnormalized graph Laplacian $L$. In particular, self-edges in a
graph do not change the corresponding graph Laplacian. \\

The unnormalized graph Laplacian and its eigenvalues and eigenvectors
can be used to describe many properties of graphs, see
\citeA{Mohar91,Mohar97}. One example which will be important for spectral
clustering is the following:

\begin{proposition}[Number of connected components and the spectrum of $\Lun$]
\label{prop-connected-unnorm}
Let $G$ be an undirected graph with non-negative weights. Then the
multiplicity $k$ 
of the eigenvalue $0$ of $\Lun$ equals the number
of connected components $A_1, \hdots, A_k$ in the graph. The eigenspace of eigenvalue $0$ is spanned
by the indicator vectors $ \charfct_{A_1}, \hdots, \charfct_{A_k}$ of those components. 
\end{proposition}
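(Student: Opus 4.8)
The plan is to first settle the case of a connected graph ($k=1$) and then bootstrap to the general case using the block structure of $\Lun$.

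First I would treat the connected case. Suppose $f \in \R^n$ is an eigenvector of $\Lun$ for the eigenvalue $0$, so $\Lun f = 0$ and hence $f' \Lun f = 0$. Part (1) of Proposition \ref{prop-unnorm} then gives
\ba
0 = \frac{1}{2} \sum_{i,j=1}^n w_{ij} (f_i - f_j)^2 .
\ea
Since the weights are non-negative, every summand is $\geq 0$, so $w_{ij}(f_i - f_j)^2 = 0$ for all $i,j$, which forces $f_i = f_j$ whenever $w_{ij} > 0$, i.e. whenever $v_i$ and $v_j$ are joined by an edge. Propagating this equality along paths and using connectedness of $G$, the vector $f$ must be constant on all of $V$. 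Thus the $0$-eigenspace is one-dimensional and spanned by $\charfct$; conversely $\Lun \charfct = 0$ by Part (3) of Proposition \ref{prop-unnorm}, so the eigenvalue $0$ has multiplicity exactly $1$, with eigenvector $\charfct = \charfct_V$.

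Then I would reduce the general case to this one. Relabel the vertices so that they are grouped by connected component $A_1, \hdots, A_k$; this is a similarity transformation by a permutation matrix and so changes neither eigenvalues nor (up to the same relabelling) eigenvectors. Because there are no edges between distinct components, $W$, and therefore $D$ and $\Lun$, become block diagonal, $\Lun = \diag(L_1, \hdots, L_k)$, where $L_m$ is precisely the unnormalized graph Laplacian of the induced subgraph on $A_m$ (here one uses that $\Lun$ only sees the within-component off-diagonal weights). The eigenvalues of a block diagonal matrix are the union, with multiplicities, of the eigenvalues of the blocks, and a $0$-eigenvector of $\Lun$ is exactly a tuple whose $m$-th block is a $0$-eigenvector of $L_m$. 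Since each induced subgraph is connected, the connected case shows that block is constant on $A_m$; padding with zeros on the other coordinates, such a vector is a linear combination of $\charfct_{A_1}, \hdots, \charfct_{A_k}$, and conversely each $\charfct_{A_m}$ is a $0$-eigenvector of $\Lun$. The indicators $\charfct_{A_m}$ have pairwise disjoint supports, hence are linearly independent, so the $0$-eigenspace has dimension exactly $k$ and is spanned by them.

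I do not expect a serious obstacle here: the only point deserving a little care is the reduction to block form — verifying that the $m$-th diagonal block really is the Laplacian of the subgraph on $A_m$, which rests on the absence of cross-component edges, and noting that a vertex relabelling is harmless for the spectrum. Once this bookkeeping is done, the connected case carries all of the mathematical content, and it in turn is an immediate consequence of the quadratic-form identity in Proposition \ref{prop-unnorm}.
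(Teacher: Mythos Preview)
Your proposal is correct and follows essentially the same route as the paper: first handle the connected case via the quadratic-form identity $f'\Lun f = \tfrac{1}{2}\sum_{i,j} w_{ij}(f_i-f_j)^2$ and a path argument, then reduce the general case to block-diagonal form and apply the connected case blockwise. The only differences are cosmetic --- you make the permutation similarity and the linear independence of the indicators explicit, which the paper leaves implicit.
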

\begin{proof} 
We start with the
  case $k=1$, that is the graph is connected.  Assume that $f$ is an
  eigenvector with eigenvalue $0$. Then we know that
\ba
0 = f' \Lun f = \sum_{i,j=1}^n w_{ij} (f_i - f_j)^2.
\ea
As the weights $w_{ij}$ are non-negative, this sum can only vanish if
all terms $w_{ij}(f_i - f_j)^2$ vanish. Thus, if two vertices $v_i$
and $v_j$ are connected (i.e., $w_{ij} > 0$), then $f_i$ needs to
equal $f_j$. With this argument we can see that $f$ needs to be
constant for all vertices which can be connected by a path in the
graph. Moreover, as all vertices of a connected component in an
undirected graph can be connected by a path, $f$ needs to be constant
on the whole connected component. In a graph consisting of only one
connected component we thus only have the constant one vector
$\charfct$ as eigenvector with eigenvalue 0, which obviously is the
indicator vector of the connected component.
\\

Now consider the case of $k$ connected components. Without loss of
generality we assume that the vertices are ordered according to the
connected components they belong to. In this case, the adjacency
matrix $W$ has a block diagonal form, and the same is true for the
matrix $\Lun$: 
\ba
\Lun = 
\begin{pmatrix}
L_1 &     &        &    \\
    & L_2 &        &    \\
    &     & \ddots &    \\
    &     &        & L_k\\
\end{pmatrix}
\ea
Note that each of the blocks $L_i$ is a proper graph Laplacian on its
own, namely the Laplacian corresponding to the subgraph of the $i$-th connected
component. 
As it is the case for all block diagonal matrices, we know that the
spectrum of $\Lun$ is given by the union of the spectra of $L_i$, and the
corresponding eigenvectors of $\Lun$ are the eigenvectors of $L_i$,
filled with 0 at the positions of the other blocks. 
As each $L_i$ is a graph Laplacian of a connected graph, we know that
every $L_i$ has eigenvalue 0 with multiplicity 1, and the
corresponding eigenvector is the constant one vector on the $i$-th
connected component. Thus, the matrix $\Lun$ has as many eigenvalues
$0$ as there are connected components, and the corresponding
eigenvectors are the indicator vectors of the connected components.
\end{proof}
\subsection{The normalized graph Laplacians} \label{ssec-lnorm}

There are two matrices which are called normalized graph
Laplacians in the literature. Both matrices are closely related to
each other and are defined as 
\ba
& \Lsym := D^{-1/2} \Lun D^{-1/2} = I -  D^{-1/2} W D^{-1/2}\\
& \Lrw := D^{-1} \Lun = I - D^{-1} W. 
\ea 
We denote the first matrix by $\Lsym$ as it is a symmetric matrix, and
the second one by $\Lrw$ as it is closely related to a random walk.
In the following we summarize several properties of $\Lsym$ and
$\Lrw$. The standard reference for normalized graph Laplacians is \citeA{Chung97}. \\

\begin{proposition}[Properties of $\Lsym$ and $\Lrw$] 
\label{prop-norm}
The normalized Laplacians satisfy the following properties:
\begin{enumerate}

\item For every $f \in \R^n$ we have 
\ba
f' \Lsym f = 
\frac{1}{2} \sum_{i,j=1}^n w_{ij} 
\left(\frac{f_i}{\sqrt{d_i}} - \frac{f_j}{\sqrt{d_j}}\right)^2.
\ea

\item $\lambda$ is an eigenvalue of $\Lrw$ with eigenvector $u$ if
  and only if $\lambda$ is an eigenvalue of $\Lsym$ with eigenvector
  $w = D^{1/2} u$. 

\item $\lambda$ is an eigenvalue of $\Lrw$ with eigenvector $u$ if and
  only if $\lambda$ and $u$ solve the generalized eigenproblem
  $\Lun u = \lambda D u$. 

\item $0$ is an eigenvalue of $\Lrw$ with the constant one vector
  $\charfct$ as eigenvector. $0$ is an eigenvalue of $\Lsym$ with
  eigenvector $D^{1/2}\charfct$. 

\item $\Lsym$ and $\Lrw$ are positive semi-definite and have $n$
  non-negative   real-valued eigenvalues  $0 = \lambda_1 \leq \hdots
  \leq \lambda_n$. 

\end{enumerate}
\end{proposition}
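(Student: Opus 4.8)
The plan is to prove the five parts in an order that lets later parts reuse earlier ones, starting from the definitions $\Lsym = D^{-1/2} \Lun D^{-1/2}$ and $\Lrw = D^{-1} \Lun$, together with Proposition~\ref{prop-unnorm}.

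\textbf{Part (1).} First I would substitute $g := D^{-1/2} f$, so that $g_i = f_i / \sqrt{d_i}$, and observe that $f' \Lsym f = f' D^{-1/2} \Lun D^{-1/2} f = (D^{-1/2} f)' \Lun (D^{-1/2} f) = g' \Lun g$. Applying Part (1) of Proposition~\ref{prop-unnorm} to $g$ immediately yields $f'\Lsym f = \frac{1}{2}\sum_{i,j} w_{ij}(g_i - g_j)^2 = \frac{1}{2}\sum_{i,j} w_{ij}\bigl(f_i/\sqrt{d_i} - f_j/\sqrt{d_j}\bigr)^2$. This part is routine; the only thing to keep in mind is that $D^{1/2}$ is symmetric, so the transpose moves cleanly.

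\textbf{Parts (2) and (3).} For Part (2) I would argue by direct algebraic manipulation: if $\Lrw u = \lambda u$, i.e.\ $D^{-1}\Lun u = \lambda u$, then multiplying both sides on the left by $D^{1/2}$ gives $D^{-1/2}\Lun u = \lambda D^{1/2} u$; inserting $I = D^{-1/2}D^{1/2}$ before $u$ on the left-hand side yields $D^{-1/2}\Lun D^{-1/2}(D^{1/2}u) = \lambda (D^{1/2}u)$, which says $\Lsym w = \lambda w$ for $w = D^{1/2}u$. The converse runs the same steps backward (multiply by $D^{-1/2}$), and here I should note that $D$ is invertible precisely because we may assume all degrees $d_i > 0$ (isolated vertices are excluded, or the convention $D^{-1}$ is taken on the support). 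Part (3) is even shorter: $\Lrw u = \lambda u \iff D^{-1}\Lun u = \lambda u \iff \Lun u = \lambda D u$, just multiplying through by $D$.

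\textbf{Parts (4) and (5).} Part (4) follows by plugging in $u = \charfct$: by Part (3) of Proposition~\ref{prop-unnorm}, $\Lun \charfct = 0 = 0 \cdot D\charfct$, so $\charfct$ solves the generalized eigenproblem with eigenvalue $0$, hence by Part~(3) here it is an eigenvector of $\Lrw$ with eigenvalue $0$; then by Part~(2) here, $D^{1/2}\charfct$ is an eigenvector of $\Lsym$ with eigenvalue $0$. For Part (5), positive semi-definiteness of $\Lsym$ is immediate from Part~(1) here, since $f'\Lsym f$ is a non-negative combination of squares; symmetry of $\Lsym$ is clear from its definition ($(D^{-1/2}\Lun D^{-1/2})' = D^{-1/2}\Lun' D^{-1/2} = D^{-1/2}\Lun D^{-1/2}$). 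Thus $\Lsym$ has $n$ real non-negative eigenvalues, and by Part~(2) here $\Lrw$ has the same eigenvalues, so it too has $n$ non-negative real eigenvalues; ordering them increasingly and using Part~(4) to see that $0$ is attained gives $0 = \lambda_1 \leq \dots \leq \lambda_n$ for both.

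I do not anticipate a genuine obstacle here — this is a sequence of short linear-algebra verifications — but the one point that needs care is the invertibility of $D$: every step involving $D^{-1/2}$ or $D^{-1}$ tacitly assumes $d_i > 0$ for all $i$, so I would either state this assumption explicitly or note that isolated vertices can be removed beforehand. A secondary subtlety is that $\Lrw$ is not symmetric, so its having real eigenvalues is not automatic and must be routed through the similarity to the symmetric matrix $\Lsym$ established in Part~(2), rather than asserted directly.
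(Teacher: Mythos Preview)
Your proposal is correct and follows essentially the same approach as the paper's proof, which is a terse sequence of the same linear-algebra manipulations: Part~(1) via Proposition~\ref{prop-unnorm}, Parts~(2)--(3) by multiplying the eigenvalue equations by powers of $D$, Part~(4) from $\Lun\charfct=0$ together with~(2), and Part~(5) from~(1) and~(2). Your additional remarks about the invertibility of $D$ and the non-symmetry of $\Lrw$ are well taken and make explicit assumptions the paper leaves implicit.
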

\begin{proof}
Part (1) can be proved similarly to Part (1) of Proposition
\ref{prop-unnorm}. \\
Part (2) can be seen immediately by multiplying the eigenvalue
equation $\Lsym w = \lambda w$ with $D^{-1/2}$ from the left and
substituting $u = D^{-1/2}w$. \\
Part (3) follows directly by multiplying the eigenvalue equation $\Lrw
u = \lambda u$ with $D$ from the left. \\
Part (4): The first statement is obvious as $\Lrw \charfct = 0$, the second statement follows from (2). \\
Part (5): The statement about $\Lsym$ follows from (1),
and then the statement about $\Lrw$ follows from (2). 
\end{proof}

As it is the case for the unnormalized graph Laplacian, the
multiplicity of the eigenvalue 0 of the normalized graph Laplacian is
related to the number of connected components: 

\begin{proposition}[Number of connected components and spectra of $\Lsym$ and $\Lrw$]
  \label{prop-connected-norm} Let $G$ be an undirected graph with
  non-negative weights. Then the multiplicity $k$ of the eigenvalue
  $0$ of both $\Lrw$ and $\Lsym$ equals the number of connected
  components $A_1, \hdots, A_k$ in the graph. For $\Lrw$, the eigenspace of $0$ is
  spanned by the indicator vectors $\charfct_{A_i}$ of those
  components. For $\Lsym$, the eigenspace of $0$ is spanned by the
  vectors   $D^{1/2} \charfct_{A_i}$. 
\end{proposition}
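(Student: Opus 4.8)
The plan is to reduce the statement about the normalized Laplacians to the already-proven statement about $\Lun$ (Proposition~\ref{prop-connected-unnorm}), using the eigenvalue/eigenvector correspondences recorded in Proposition~\ref{prop-norm}. The key observation is that part~(3) of Proposition~\ref{prop-norm} tells us that $0$ is an eigenvalue of $\Lrw$ with eigenvector $u$ if and only if $u$ solves the generalized eigenproblem $\Lun u = 0 \cdot D u$, i.e. $\Lun u = 0$. Hence the eigenspace of $\Lrw$ for eigenvalue $0$ coincides \emph{exactly} with the kernel of $\Lun$, with no change of vectors at all. By Proposition~\ref{prop-connected-unnorm} this kernel has dimension $k$ and is spanned by $\charfct_{A_1}, \hdots, \charfct_{A_k}$, which immediately gives both the multiplicity claim and the eigenspace claim for $\Lrw$.

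For $\Lsym$, I would then invoke part~(2) of Proposition~\ref{prop-norm}: $\lambda$ is an eigenvalue of $\Lrw$ with eigenvector $u$ if and only if $\lambda$ is an eigenvalue of $\Lsym$ with eigenvector $D^{1/2} u$. Since $D^{1/2}$ is an invertible linear map (all degrees are positive, assuming no isolated vertices), it carries the $0$-eigenspace of $\Lrw$ isomorphically onto the $0$-eigenspace of $\Lsym$; in particular the multiplicity is preserved and the spanning set $\charfct_{A_1}, \hdots, \charfct_{A_k}$ is mapped to the spanning set $D^{1/2}\charfct_{A_1}, \hdots, D^{1/2}\charfct_{A_k}$. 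This finishes the argument for $\Lsym$.

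Alternatively, and perhaps more self-containedly for $\Lsym$, one can repeat the proof of Proposition~\ref{prop-connected-unnorm} directly using part~(1) of Proposition~\ref{prop-norm}: if $\Lsym f = 0$ then $f'\Lsym f = \frac12 \sum_{i,j} w_{ij}(f_i/\sqrt{d_i} - f_j/\sqrt{d_j})^2 = 0$, so the vector with entries $f_i/\sqrt{d_i}$ must be constant on each connected component; writing this out shows $f$ is a linear combination of the $D^{1/2}\charfct_{A_i}$, and conversely each such vector is easily checked to lie in the kernel. I would present the short reduction via parts~(2) and~(3) as the main proof and perhaps remark on the direct argument.

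The only real subtlety — and the one point I would be careful to state explicitly — is the tacit assumption that every vertex has strictly positive degree, so that $D$, $D^{1/2}$, and $D^{-1/2}$ are well-defined and invertible; this is implicit in the very definition of $\Lsym$ and $\Lrw$ given earlier. Beyond that the proof is a routine transport of the unnormalized result through the explicit bijections, so there is no genuine obstacle; the work is essentially bookkeeping to make sure the correspondences are applied in the right direction (eigenvectors of $\Lrw$ \emph{are} the kernel vectors of $\Lun$, and eigenvectors of $\Lsym$ are their images under $D^{1/2}$).
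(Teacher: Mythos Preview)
Your proposal is correct and matches the paper's approach: the paper's own proof is simply the one-line remark that the argument is analogous to Proposition~\ref{prop-connected-unnorm} using Proposition~\ref{prop-norm}, and what you have written is exactly a fleshed-out version of that reduction. Your main route via parts~(2) and~(3) and your alternative direct computation via part~(1) are both legitimate readings of ``analogous''; either would be accepted without reservation.
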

\begin{proof} The proof is analogous to the one of Proposition
  \ref{prop-connected-unnorm}, using Proposition \ref{prop-norm}. 
\end{proof}
\section{Spectral Clustering Algorithms} \label{sec-algorithms}

Now we would like to state the most common spectral clustering
algorithms. For references and the history of spectral clustering
we refer to Section \ref{sec-extensions}. We assume that our data consists of $n$ ``points'' 
$x_1,\hdots,x_n$ which can be arbitrary objects. We measure their pairwise
similarities $s_{ij} = s(x_i,x_j)$ by some
similarity function which is symmetric and non-negative, and we denote the
corresponding similarity matrix by $S = (s_{ij})_{i,j=1 \hdots n}$.

\ulesalgorithm{Unnormalized spectral clustering}{
Input: Similarity matrix $S \in \R^{n \times n}$, number $k$ of clusters to con\-struct. 
\blobb{ Construct a similarity graph by one of the ways described in
   Section \ref{sec-graphs}. Let $W$ be its weighted adjacency matrix. }
\blobb{Compute the unnormalized Laplacian $\Lun$. }
\blobb{{\bf Compute the first  $\boldsymbol k$ eigenvectors $\boldsymbol{u_1,\hdots,u_{k}}$ of $\boldsymbol{\Lun}$.} }
\blobb{ Let $U \in \R^{n \times k}$ be the matrix containing the vectors $u_1,\hdots,u_k$ as
columns. }
\blobb{ For $i=1,\hdots,n$, let $y_i \in \R^{k}$ be the vector corresponding to the $i$-th
row of $U$. }
\blobb{ Cluster the points $(y_i)_{i=1,\hdots,n}$ in $\R^k$ with the $k$-means
algorithm into clusters $C_1,\hdots,C_k$. }
Output: Clusters $A_1,\hdots,A_k$ with $A_i = \{j | \; y_j \in
C_i\}$. 
}

There are two different versions of normalized spectral clustering,
depending which of the normalized graph Laplacians is used. We name
both algorithms after two popular papers, for more references and
history please see Section~\ref{sec-extensions}.

\ulesalgorithm{Normalized spectral clustering according to
  \citeA{ShiMal00}}{%
Input: Similarity matrix $S \in \R^{n \times n}$, number $k$ of clusters to con\-struct. 
\blobb{ Construct a similarity graph by one of the ways described in
   Section \ref{sec-graphs}. Let $W$ be its weighted adjacency matrix. }
\blobb{Compute the unnormalized Laplacian $\Lun$. }
\blobb{{\bf Compute the first $\boldsymbol k$  generalized eigenvectors $\boldsymbol{ u_1,\hdots,u_{k}}$ of the
  generalized eigenproblem $\boldsymbol{\Lun u = \lambda D u}$. }}
\blobb{ Let $U\in \R^{n \times k}$ be the matrix containing the vectors $u_1,\hdots,u_k$ as
columns. }
\blobb{ For $i=1,\hdots,n$, let $y_i \in \R^{k}$ be the vector corresponding to the $i$-th
row of $U$. }
\blobb{ Cluster the points $(y_i)_{i=1,\hdots,n}$ in $\R^k$ with the $k$-means
algorithm into clusters $C_1, \hdots, C_k$. }
Output: Clusters $A_1,\hdots,A_k$ with $A_i = \{j | \; y_j \in
C_i\}$. 
}

Note that this algorithm uses the generalized eigenvectors of $\Lun$,
which according to Proposition \ref{prop-norm} correspond to the
eigenvectors of the matrix $\Lrw$. So in fact, the
algorithm works with eigenvectors of the normalized Laplacian $\Lrw$, and hence is called
normalized spectral clustering. 
The next algorithm also uses a normalized Laplacian, but this time the
matrix $\Lsym$ instead of $\Lrw$. As we will see, this algorithm needs
to introduce an additional row normalization step which is not needed
in the other algorithms. The reasons will
become clear in Section \ref{sec-perturbation}. \\

\ulesalgorithm{Normalized spectral clustering according to
  \citeA{NgJorWei02}}{%
Input: Similarity matrix $S \in \R^{n \times n}$, number $k$ of clusters to con\-struct. 
\blobb{ Construct a similarity graph by one of the ways described in
   Section \ref{sec-graphs}. Let $W$ be its weighted adjacency matrix. }
\blobb{Compute the normalized Laplacian $\Lsym$. }
\blobb{{\bf Compute the first $ \boldsymbol k$  eigenvectors
    $\boldsymbol{u_1,\hdots,u_{k}}$ of 
$\boldsymbol{\Lsym}$.} }
\blobb{ Let $U\in \R^{n \times k}$ be the matrix containing the vectors $u_1,\hdots,u_k$ as
columns. }
\blobb{ {\bf Form the matrix $T\in \R^{n \times k}$ from $U$ by normalizing the rows to norm 1}, \\
that is set $t_{ij} = u_{ij}/ (\sum_k u_{ik}^2)^{1/2}$. }
\blobb{ For $i=1,\hdots,n$, let $y_i \in \R^{k}$ be the vector corresponding to the $i$-th
row of $T$. }
\blobb{ Cluster the points $(y_i)_{i=1,\hdots,n}$ with the $k$-means
algorithm into clusters $C_1, \hdots, C_k$. }
Output: Clusters $A_1,\hdots,A_k$ with $A_i = \{j | \; y_j \in
C_i\}$. 
}

All three algorithms stated above look rather similar, apart from the
fact that they use three different graph Laplacians.  In all three
algorithms, the main trick is to change the representation of the
abstract data points $x_i$ to points $y_i \in \R^k$. It is due to the
properties of the graph Laplacians that this change of representation
is useful. We will see in the next sections that this change of
representation enhances the cluster-properties in the data, so that
clusters can be trivially detected in the new representation. In
particular, the simple $k$-means clustering algorithm has no
difficulties to detect the clusters in this new representation.
Readers not familiar with $k$-means can read up on this algorithm in numerous text books, for
example in \citeA{HasTibFri01}.  \\

\begin{figure}[bt!]
\includegraphics[height=0.10\textheight]{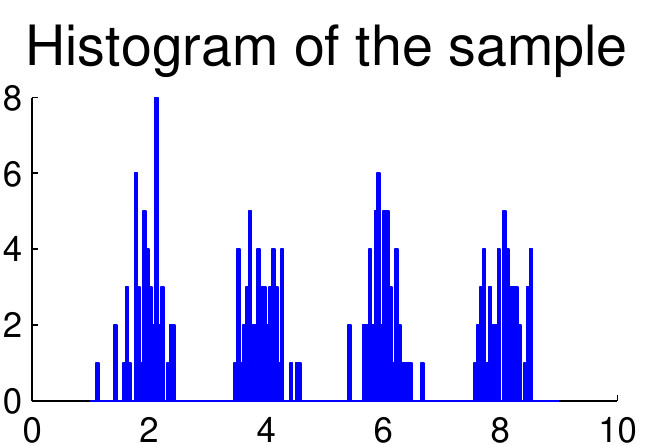}\\
\includegraphics[height=0.10\textheight]{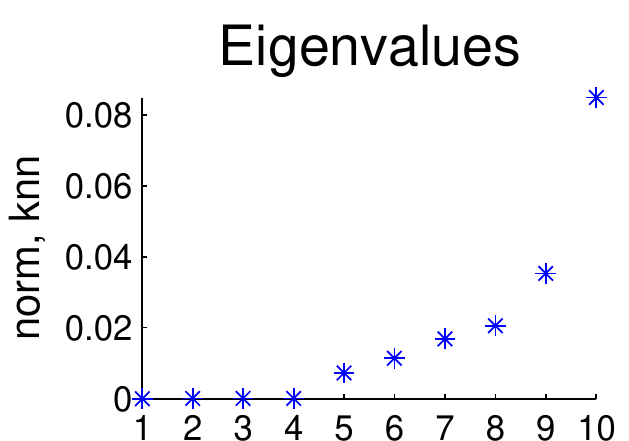}
\includegraphics[height=0.10\textheight]{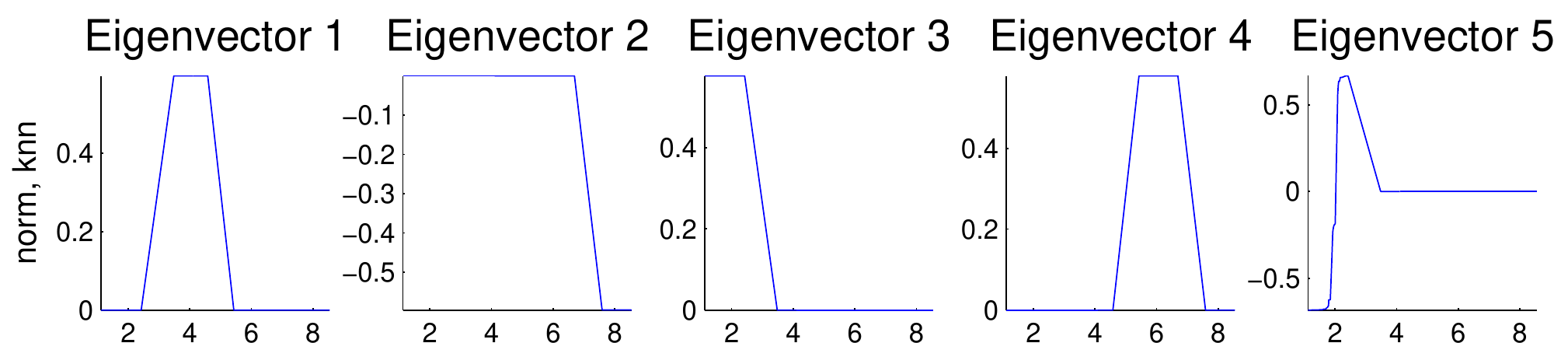}\\
\includegraphics[height=0.10\textheight]{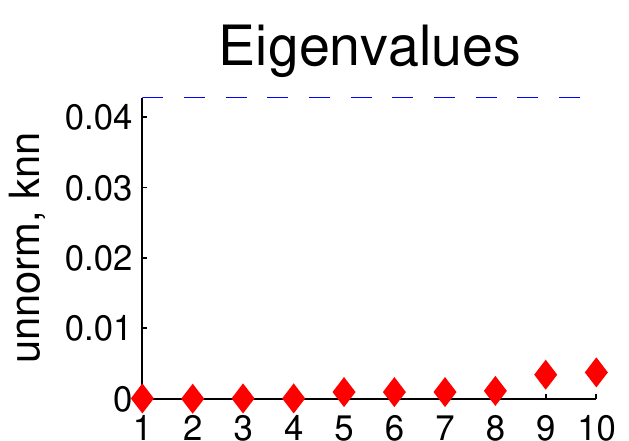}
\includegraphics[height=0.10\textheight]{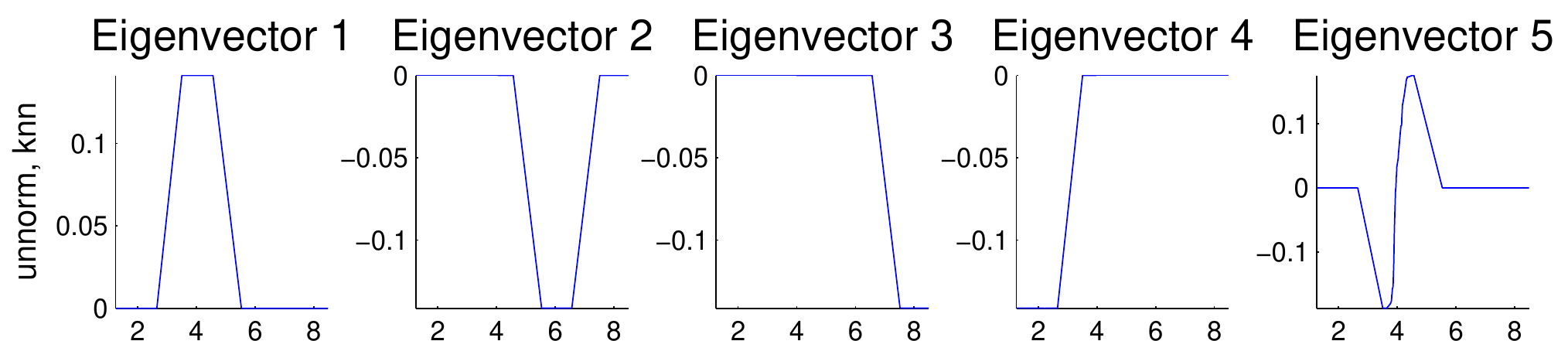}\\
\includegraphics[height=0.10\textheight]{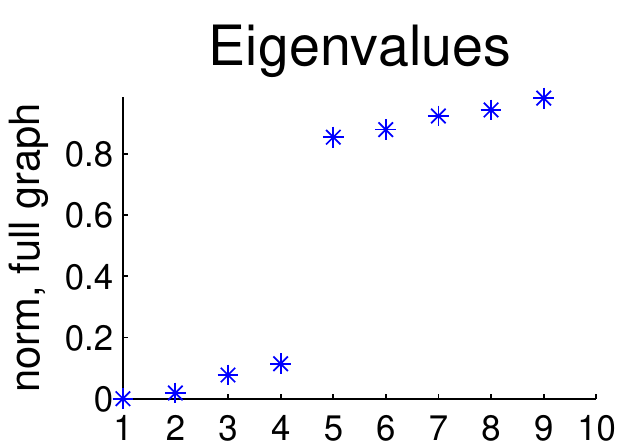}
\includegraphics[height=0.10\textheight]{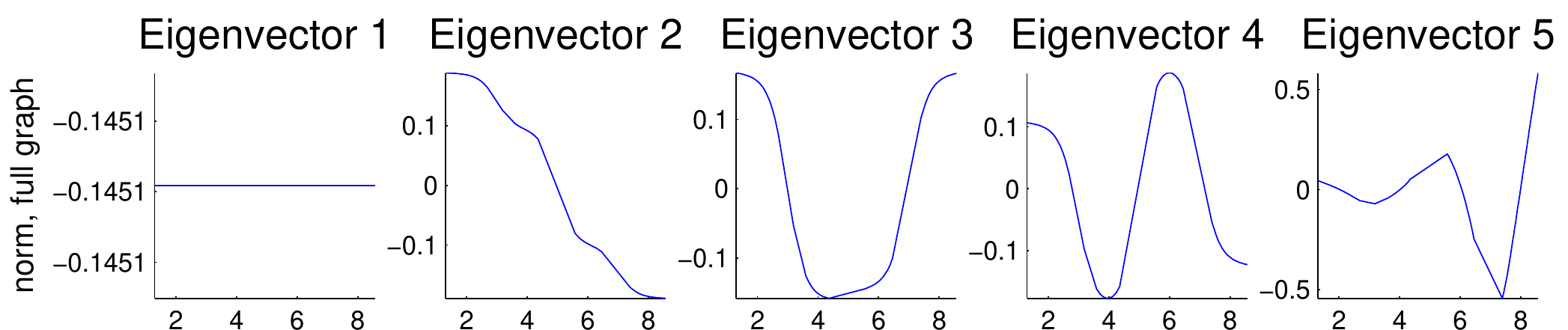}\\
\includegraphics[height=0.10\textheight]{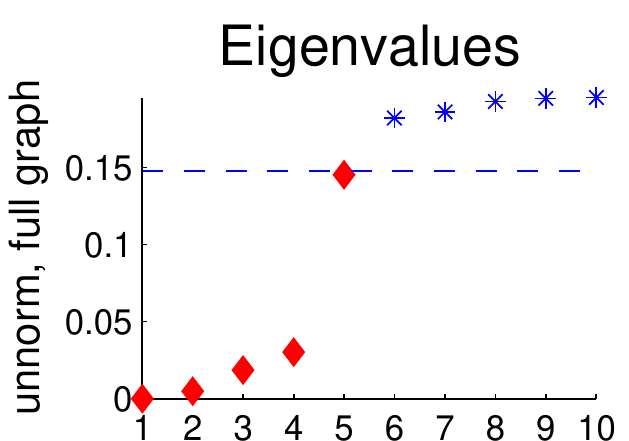}
\includegraphics[height=0.10\textheight]{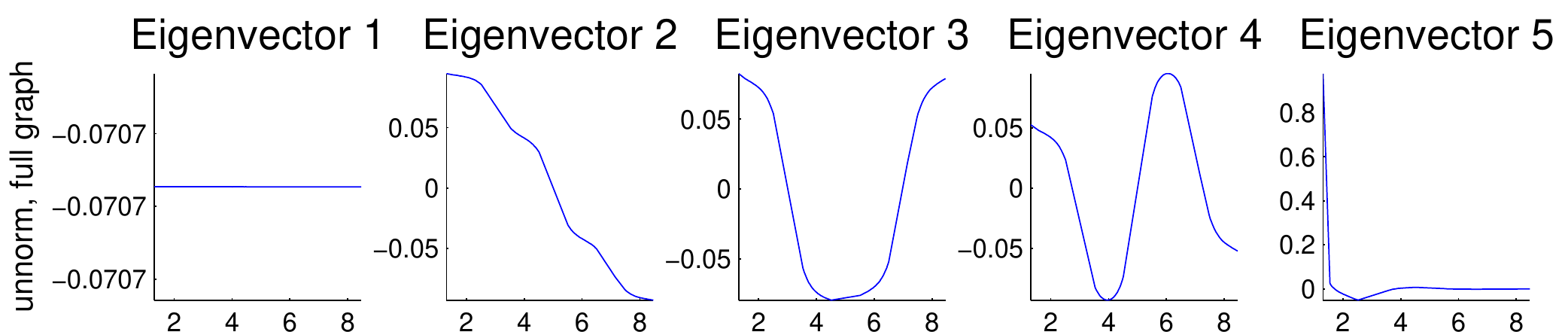}\\
\caption{Toy example for spectral clustering where the data points have been drawn from a mixture of four Gaussians on $\R$.  Left upper corner:
  histogram of the data. First and second row: eigenvalues and
  eigenvectors of $\Lrw$ and $\Lun$ based on the $k$-nearest neighbor
  graph. Third and fourth row: eigenvalues and eigenvectors of $\Lrw$
  and $\Lun$ based on the fully connected graph. For all plots, we
  used the Gaussian kernel with $\sigma=1$ as similarity
  function. See text for more details. }
\label{fig-example-spectral}
\end{figure}
Before we dive into the theory of spectral clustering, we would like
to illustrate its principle on a very simple toy example. This example
will be used at several places in this tutorial, and we chose it
because it is so simple that the relevant quantities can easily be
plotted. This toy data set consists of a random sample of 200 points
$x_1,\hdots,x_{200} \in \R$ drawn according to a mixture of four
Gaussians. The first row of Figure \ref{fig-example-spectral} shows
the histogram of a sample drawn from this distribution (the $x$-axis
represents the one-dimensional data space). As similarity function on this
data set we choose the Gaussian similarity function $s(x_i, x_j) =
\exp(-|x_i - x_j|^2/ (2\sigma^2))$ with $\sigma = 1$. As similarity graph
we consider both the fully connected graph and the $10$-nearest
neighbor graph. In Figure \ref{fig-example-spectral} we
show the first eigenvalues and eigenvectors of the unnormalized
Laplacian $\Lun$ and the normalized Laplacian $\Lrw$.  That is, in the
eigenvalue plot we plot $i$ vs. $\lambda_i$ (for the moment ignore the
dashed line and the different shapes of the eigenvalues in the plots
for the unnormalized case; their meaning will be discussed in Section
\ref{ssec-whichone}). In the eigenvector plots of an eigenvector $u =
(u_1,\hdots,u_{200})'$ we plot $x_i$ vs.  $u_i$ (note that in the
example chosen $x_i$ is simply a real number, hence we can depict it
on the $x$-axis).  The first two rows of Figure
\ref{fig-example-spectral} show the results based on the $10$-nearest
neighbor graph. We can see that the first four eigenvalues are 0, and
the corresponding eigenvectors are cluster indicator vectors. The
reason is that the clusters form disconnected parts in the $10$-nearest
neighbor graph, in which case the eigenvectors are given as in
Propositions \ref{prop-connected-unnorm} and
\ref{prop-connected-norm}. The next two rows show the results for the
fully connected graph. As the Gaussian similarity function is always
positive, this graph only consists of one connected component. Thus,
eigenvalue 0 has multiplicity 1, and the first eigenvector is the
constant vector. The following eigenvectors carry the information
about the clusters. For example in the unnormalized case (last row),
if we threshold the second eigenvector at 0, then the part below 0
corresponds to clusters 1 and 2, and the part above 0 to clusters 3
and 4. Similarly, thresholding the third eigenvector separates
clusters 1 and 4 from clusters 2 and 3, and thresholding the fourth
eigenvector separates clusters 1 and 3 from clusters 2 and 4.
Altogether, the first four eigenvectors carry all the information
about the four clusters. In all the cases illustrated in this figure,
spectral clustering using $k$-means on the first four eigenvectors
easily detects the correct four clusters.

\section{Graph cut point of view} \label{sec-graphcuts}

The intuition of clustering is to separate points in different groups
according to their similarities. For data given in form of a
similarity graph, this problem can be restated as follows: we want to
find a partition of the graph such that the edges between different
groups have a very low weight (which means that points in different
clusters are dissimilar from each other) and the edges within a group
have high weight (which means that points within the same cluster are
similar to each other). In this section we will see how spectral
clustering can be derived as an approximation to such graph
partitioning problems. \\

Given a similarity graph with adjacency matrix $W$, the simplest and
most direct way to construct a partition of the graph is to solve the
mincut problem. To define it, please recall the notation $W(A,B) :=
\sum_{i \in A, j \in B} w_{ij}$ and $\bar A$ for the complement of $A$. For a given number $k$ of subsets, the mincut approach simply consists
in choosing a partition $A_1,\hdots,A_k$ which minimizes 
\ba
\Cut(A_1,\hdots,A_k) := \frac{1}{2}\sum_{i=1}^k W(A_i, \bar A_i).
\ea
Here we introduce the factor $1/2$ for notational consistency, otherwise
we would count each edge twice in the cut. %
In particular for
$k=2$, mincut is a relatively easy problem and can be solved
efficiently, see \citeA{StoWag97} and the discussion therein.
However, in practice it often does not lead to satisfactory
partitions.  The problem is that in many cases, the solution of mincut
simply separates one individual vertex from the rest of the graph. Of
course this is not what we want to achieve in clustering, as clusters
should be reasonably large groups of points.
One way to circumvent this problem is to explicitly request that the
sets $A_1,\hdots, A_k$ are ``reasonably large''.  The two most common
objective functions to encode this are $\Ratiocut$ (\citeNP{HagKah92}) and the normalized cut $\Ncut$ (\citeNP{ShiMal00}).  In $\Ratiocut$, the size of a subset $A$ of a
graph is measured by its number of vertices $|A|$, while in $\Ncut$
the size is measured by the weights of its edges $\vol(A)$. The
definitions are:
\ba
& \Ratiocut(A_1,\hdots,A_k) 
:= \frac{1}{2}\sum_{i=1}^k \frac{W(A_i, \bar  A_i)}{|A_i|} 
= \sum_{i=1}^k \frac{\Cut(A_i, \bar  A_i)}{|A_i|}\\
& \Ncut(A_1,\hdots,A_k) 
:= \frac{1}{2}\sum_{i=1}^k \frac{W(A_i, \bar A_i)}{\vol(A_i)}
= \sum_{i=1}^k \frac{\Cut(A_i, \bar A_i)}{\vol(A_i)}.
\ea

Note that both objective functions take a small value if the clusters
$A_i$ are not too small. In particular, the minimum of the function
$\sum_{i=1}^k (1/|A_i|)$ is achieved if all $|A_i|$ coincide, and the
minimum of $\sum_{i=1}^k (1/\vol(A_i))$ is achieved if all $\vol(A_i)$
coincide. So what both objective functions try to achieve is that the
clusters are ``balanced'', as measured by the number of vertices or
edge weights, respectively.
Unfortunately, introducing balancing conditions makes the previously
simple to solve mincut problem become NP hard, see \citeA{WagWag93}
for a discussion. Spectral clustering is a way to solve relaxed
versions of those problems. We will see that relaxing $\Ncut$ leads to
normalized spectral clustering, while relaxing $\Ratiocut$ leads to
unnormalized spectral clustering (see also the tutorial slides by
\citeA{Ding04_tutorial}). 

  \subsection{Approximating $\boldsymbol{\Ratiocut}$ for $\boldsymbol{k=2}$ } 

Let us start with the case of $\Ratiocut$ and $k=2$, because the
relaxation is easiest to understand in this setting. Our goal is to
solve the optimization problem 
\banum \label{eq-obj-ratio}
\min_{A \subset V} \Ratiocut(A, \bar A).
\eanum
We first rewrite the problem in a more convenient
form. Given a subset $A \subset V$ we define the vector $f = (f_1,\hdots,f_n)' \in
  \RR^n$ with entries 
\banum \label{eq-def-f} 
f_i = 
\begin{cases} 
\sqrt{|\overline A| / |A|} & \text{ if } v_i \in A \\
- \sqrt{|A| / |\bar A|} & \text{ if } v_i \in \bar A. 
\end{cases}
\eanum
Now the $\Ratiocut$ objective function can be
conveniently rewritten using the unnormalized graph Laplacian. This is
due to the following calculation: 
\ba
f'\Lun f & = 
\frac{1}{2}
\sum_{i,j=1}^n w_{ij} (f_i - f_j)^2 \\
& = 
\frac{1}{2}
\sum_{i \in A, j \in \bar A}  w_{ij}
\left(
\sqrt{\frac{|\bar A|}{|A|}} + \sqrt{\frac{|A|}{|\bar A|}}
\right)^2 
+ 
\frac{1}{2}\sum_{i \in \bar A, j \in A}  w_{ij}
\left(
- \sqrt{\frac{|\bar A|}{|A|}} - \sqrt{\frac{|A|}{|\bar A|}}
\right)^2 \\
& = 
\Cut(A,\bar A)
\left(
\frac{|\bar A|}{|A|} + \frac{|A|}{|\bar A|} + 2
\right) \\
& =
\Cut(A,\bar A)
\left(
\frac{|A| + |\bar A|}{|A|} + \frac{|A| + |\bar A|}{|\bar A|}
\right) \\
& = 
|V| \cdot \Ratiocut(A, \bar A).
\ea
Additionally, we have 
\ba
\sum_{i=1}^n f_i = \sum_{i \in A} 
\sqrt{\frac{|\bar A|}{|A|}} - 
\sum_{i \in \bar A} \sqrt{\frac{|A|}{|\bar A|}}
= |A| \sqrt{\frac{|\bar A|}{|A|}} - |\bar A| \sqrt{\frac{|A|}{|\bar
    A|}} = 0.
\ea 

In other words, the vector $f$ as defined in Equation \eqref{eq-def-f}
is
orthogonal to the constant one vector $\charfct$. Finally, note that
$f$ satisfies 
\ba
\norm[f]^2 = \sum_{i=1}^n f_i^2 = |A| {\frac{|\bar A|}{|A|}} + |\bar A|
{\frac{|A|}{|\bar A|}} = |\bar A| + |A| = n.
\ea
Altogether we can see that the problem of minimizing \eqref{eq-obj-ratio} can be equivalently rewritten
as
\banum
\min_{A \subset V} f'Lf \text{ subject to } f \perp \charfct, \;
f_i \text{ as defined in Eq. \eqref{eq-def-f}}, \;
\norm[f] = \sqrt{n}.
\eanum

This is a discrete optimization problem as the entries of the solution
vector $f$ are only allowed to take two particular values, and of
course it is still NP hard.  The most obvious relaxation in this
setting is to discard the discreteness condition and instead allow
that $f_i$ takes arbitrary values in $\RR$. This leads to the relaxed
optimization problem
\banum
\min_{f \in \RR^n} f'Lf \text{ subject to } f \perp \charfct, \; \norm[f] = \sqrt{n}.
\eanum
By the Rayleigh-Ritz theorem (e.g., see Section 5.5.2. of \citeNP{Luetkepohl96})
it can be seen immediately that the solution of this problem is given
by the vector $f$ which is the eigenvector corresponding to the second
smallest eigenvalue of $\Lun$ (recall that the smallest eigenvalue of
$\Lun$ is 0 with eigenvector $\charfct$). So we can approximate a
minimizer of $\Ratiocut$ by the second eigenvector of
$\Lun$. However, in order to obtain a partition of the graph 
we need to re-transform the real-valued solution vector $f$ of the
relaxed problem 
into a discrete indicator vector.  The simplest way to do
this is to
use the sign of $f$ as indicator function, that is to choose 
\ba
\begin{cases}
v_i \in A & \text{ if } f_i \geq 0\\
v_i \in \bar A & \text{ if } f_i < 0.\\
\end{cases}
\ea 

However, in particular in the case of $k>2$ treated below, this
heuristic is too simple. What most spectral clustering algorithms do instead is to consider
the coordinates $f_i$ as points in $\R$ and cluster them into two
groups $C, \bar C$ by the $k$-means clustering algorithm. Then we
carry over the resulting clustering to the underlying data points, that is
we choose \ba
\begin{cases}
v_i \in A & \text{ if } f_i \in C\\
v_i \in \bar A & \text{ if } f_i \in \bar C.\\
\end{cases}
\ea
This is exactly the {\em unnormalized spectral clustering} algorithm
for the case
of $k=2$. \\

  \subsection{Approximating $\boldsymbol{\Ratiocut}$ for arbitrary $\boldsymbol{k}$} 

The relaxation of the $\Ratiocut$ minimization problem in the case of
a general value $k$ follows a similar principle as the one above. Given a partition
of $V$ into $k$ sets $A_1,\hdots,A_k$, we define $k$
indicator vectors $h_j = (h_{1,j},\hdots,h_{n,j})'$ by 
\banum
\label{eq-H}
h_{i,j} = 
\begin{cases} 1/\sqrt{ |A_j|} & \text{ if } v_i \in A_j \\
0 & \text{ otherwise}
\end{cases}
\hspace{2cm}
\; (i=1,\hdots,n; \; j = 1,\hdots, k).
\eanum
Then we set the matrix $H \in \R^{n \times k}$ as the matrix containing those $k$
indicator vectors as columns. Observe that the columns in $H$ are orthonormal to each
other, that is $H'H = I$. Similar to the calculations in the last
section we can see that 
\ba
h_i' \Lun h_i = \frac{\Cut(A_i, \bar A_i)}{|A_i|}.
\ea
Moreover, one can check that 
\ba
h_i' \Lun h_i = (H' \Lun H)_{ii}.
\ea
Combining those facts we get 
\ba
\Ratiocut(A_1,\hdots,A_k) =
\sum_{i=1}^k h_i' \Lun h_i 
=  \sum_{i=1}^k (H' \Lun H)_{ii} 
=  \Tr( H'\Lun H), 
\ea
where $\Tr$ denotes the trace of a matrix. 
So the problem of minimizing $\Ratiocut(A_1,\hdots,A_k)$ can be rewritten 
as 
\ba
\min_{A_1,\hdots,A_k} \Tr(H' \Lun H) \text{ subject to } H'H = I, \; H \text{ as defined
  in Eq. \eqref{eq-H}}.
\ea
Similar to above we now
relax the problem by allowing the entries of the matrix $H$ to take
arbitrary real values. Then the relaxed problem becomes: 
\ba
\min_{H \in \R^{n \times k}} \Tr (H' \Lun H)\; \text{ subject to } H'H = I.
\ea
This is the standard form of a trace minimization problem, and again
a version of the Rayleigh-Ritz theorem (e.g., see Section 5.2.2.(6)
of \citeNP{Luetkepohl96}) tells us that the solution is given by choosing
$H$ as the matrix which contains the first $k$ eigenvectors of $\Lun$
as columns. We can see that the  matrix $H$ is in fact the matrix $U$ used in the unnormalized
spectral clustering algorithm as described in Section
\ref{sec-algorithms}. Again we need to re-convert the real valued
solution matrix to a discrete partition. As above, the standard way is to
use the $k$-means algorithms on the rows
of $U$. This leads to the general unnormalized spectral
clustering algorithm as presented in Section \ref{sec-algorithms}.

  \subsection{Approximating  $\boldsymbol{\Ncut}$} 

Techniques very similar to the ones used for $\Ratiocut$  can be used to
derive normalized spectral clustering as relaxation of 
minimizing $\Ncut$. 
In the case $k=2$ we define the cluster indicator vector $f$ by 
\banum
\label{eq-def-f-norm}
f_i = 
\begin{cases}
\sqrt{\frac{\vol(\bar A)}{\vol A}} & \text{ if } v_i \in A \\
- \sqrt{\frac{\vol(A)}{\vol(\bar A)}} & \text{ if }  v_i \in \bar A.
\end{cases}
\eanum
Similar to above one can check that $(D f)'
\charfct = 0$, $f'D f = \vol(V)$, and $f' \Lun f = \vol(V)
\Ncut(A,\bar A)$. Thus we can rewrite the problem of minimizing
$\Ncut$ by the equivalent problem
\banum
\label{eq-norm-orig}
\min_{A} {f' \Lun f} \; \text{ subject to } \; 
f \text{ as in (\ref{eq-def-f-norm})},  \; 
D f \perp \charfct, \; 
f'Df = \vol(V).
\eanum
Again we relax the problem by allowing $f$ to take arbitrary real values: 
\banum
\label{eq-norm-relaxed}
\min_{f \in \R^n} {f' \Lun f} \; \text{ subject to } \; 
D f \perp \charfct, \; 
f'Df = \vol(V).
\eanum
Now we substitute $g := D^{1/2}f$. After substitution, the problem is 
\banum
\label{eq-norm-substituted}
\min_{g \in \R^n} g' D^{-1/2}\Lun D^{-1/2} g \; \text{ subject to } \; 
g  \perp D^{1/2}\charfct, \; 
\norm[g]^2 = \vol(V).
\eanum
Observe that $D^{-1/2}\Lun D^{-1/2} = \Lsym$, $D^{1/2}\charfct$ is
the first eigenvector of $\Lsym$, and $\vol(V)$ is a constant. Hence, 
Problem \eqref{eq-norm-substituted} is in the form
of the standard Rayleigh-Ritz theorem, and its solution $g$ is given
by the second eigenvector of $\Lsym$. Re-substituting $f = D^{-1/2} g$
and using Proposition \ref{prop-norm} we see that $f$ is the
second eigenvector of $\Lrw$, or equivalently the generalized
eigenvector of $\Lun u = \lambda D u$. \\

For the case of finding $k>2$ clusters, we define the indicator
vectors $h_j = (h_{1,j},\hdots,h_{n,j})'$ by 
\banum
\label{eq-norm-def-h}
h_{i,j} = 
\begin{cases} 1/\sqrt{ \vol(A_j)} & \text{ if } v_i \in A_j \\
0 & \text{ otherwise}
\end{cases}
\hspace{2cm}
\; (i=1,\hdots,n; \; j = 1,\hdots, k).
\eanum
Then we set the matrix $H$ as the matrix containing those $k$
indicator vectors as columns. Observe that $H'H = I$, $h_i'Dh_i = 1$,
and $h_i'L h_i = \Cut(A_i, \bar A_i) / \vol(A_i)$. So we can write
the problem of minimizing $\Ncut$ as 
\ba
\min_{A_1,\hdots,A_k} \Tr(H'LH) \text{ subject to } 
H'DH = I, \; 
\text{ H as in \eqref{eq-norm-def-h} }.
\ea
Relaxing the discreteness condition and substituting $T = D^{1/2}H$ we
obtain the relaxed problem 
\banum
\label{eq-norm-subst}
\min_{T \in \R^{n\times k}} \Tr(T'D^{-1/2}\Lun D^{-1/2}T) 
\text{ subject to } T'T = I.
\eanum
Again this is the standard trace minimization problem which is solved
by the matrix $T$ which contains the first $k$ eigenvectors of $\Lsym$
as columns. Re-substituting $H = D^{-1/2}T$ and using Proposition
\ref{prop-norm} we see that the solution $H$ consists of the first $k$
eigenvectors of the matrix $\Lrw$, or the first $k$ generalized
eigenvectors of $\Lun u = \lambda D u$. This yields the normalized
spectral clustering algorithm according to \citeA{ShiMal00}.\\

  \subsection{Comments on the relaxation approach} 

  There are several comments we should make about this derivation of
  spectral clustering. Most importantly, there is no guarantee
  whatsoever on the quality of the solution of the relaxed problem
  compared to the exact solution. That is, if $A_1,\hdots,A_k$ is the
  exact solution of minimizing $\Ratiocut$, and $B_1, \hdots, B_k$ is
  the solution constructed by unnormalized spectral clustering, then
  $\Ratiocut(B_1, \hdots, B_k) - \Ratiocut(A_1,\hdots,A_k)$ can be
  arbitrary large. Several examples for this can be found in
  \citeA{GuaMil98}. For instance, the authors consider a very simple class of
  graphs called ``cockroach graphs''.  Those graphs essentially look
  like a ladder, with a few rimes removed, see Figure
  \ref{fig-cockroach}.
\begin{figure}[tb]
\begin{center}
\includegraphics[width=0.5\textwidth]{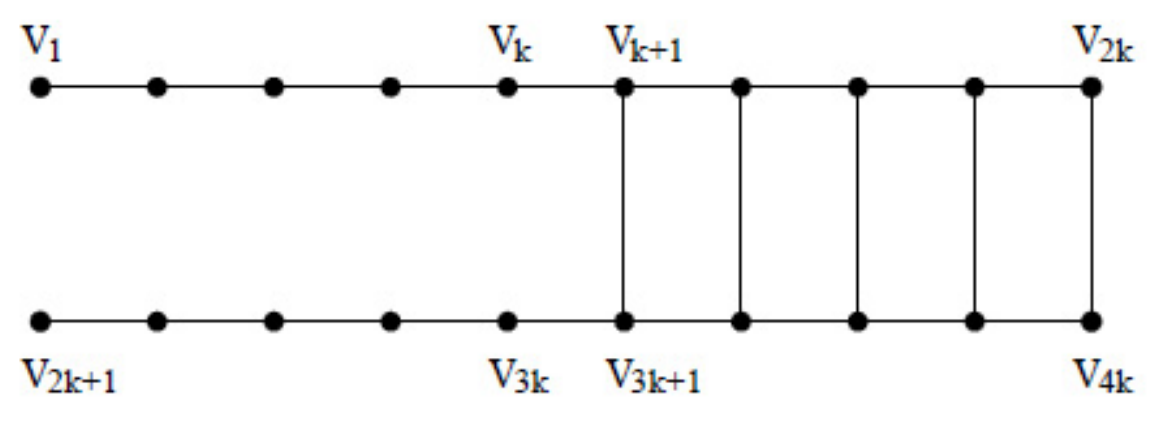}
\end{center}
\caption{The cockroach graph from \protect\citeA{GuaMil98}.}
\label{fig-cockroach}
\end{figure} 
Obviously, the ideal $\Ratiocut$ for $k=2$ just cuts the ladder by a vertical
cut such that $A = \{ v_1,\hdots, v_k, v_{2k+1}, \hdots, v_{3k}\}$ and
$\bar A = \{v_{k+1},\hdots,v_{2k}, v_{3k+1},\hdots,v_{4k}\}$.  This
cut is perfectly balanced with $|A| = |\bar A| = 2k$ and $\Cut(A,\bar
A) = 2$. However, by studying the properties of the second eigenvector
of the unnormalized graph Laplacian of cockroach graphs the authors
prove that unnormalized spectral clustering always cuts horizontally
through the ladder, constructing the sets $B = \{v_1, \hdots, v_{2k}
\}$ and $\bar B = \{v_{2k+1},\hdots,v_{4k}\}$.  This also results in a
balanced cut, but now we cut $k$ edges instead of just $2$. So
$\Ratiocut(A,\bar A) = 2/k$, while $\Ratiocut(B, \bar B) = 1$.  This
means that compared to the optimal cut, the
$\Ratiocut$ value obtained by spectral clustering is $k/2$
times worse, that is a factor in the order of $n$.  %
Several other papers investigate the quality of the clustering
constructed by spectral clustering, for example \citeA{SpiTen96} (for
unnormalized spectral clustering) and
\citeA{KanVemVet04_journal} (for normalized spectral clustering). 
In general it is known that efficient algorithms to approximate
balanced graph cuts up to a constant factor do not exist. To the
contrary, this approximation problem can be NP hard itself
\cite{BuiJon92}. \\

Of course, the relaxation we discussed above is not unique.  For
example, a completely different relaxation which leads to a
semi-definite program is derived in \citeA{DebCri06}, and there might
be many other useful relaxations. 
The reason why the spectral relaxation is so appealing is not that it
leads to particularly good solutions. Its popularity is mainly due to the fact that it results in 
a standard linear algebra problem which is simple to solve. \\

\section{Random walks point of view} \label{sec-rw}

Another line of argument to explain spectral clustering is based on
random walks on the similarity graph. A random walk on a graph is a
stochastic process which randomly jumps from vertex to vertex.  We
will see below that spectral clustering can be interpreted as trying
to find a partition of the graph such that the random walk stays long
within the same cluster and seldom jumps between clusters. Intuitively
this makes sense, in particular together with the graph cut
explanation of the last section: a balanced partition with a low cut
will also have the property that the random walk does not have many
opportunities to jump between clusters. For background reading on random walks in general we
refer to \citeA{Norris97} and \citeA{Bremaud99}, and for random walks on
graphs we recommend \citeA{AldFil} and \citeA{Lovasz93}. 
Formally, the transition probability of jumping in one step from vertex $v_i$ to
vertex $v_j$ is proportional to the edge weight $w_{ij}$ and  is
given by $p_{ij}:= w_{ij}/d_i$. The transition matrix $P =
(p_{ij})_{i,j=1,\hdots,n}$ of the random
walk is thus defined by
\ba
P = D^{-1} W.
\ea
If the graph is connected and non-bipartite,
then the random walk always possesses a unique stationary distribution
$\pi = (\pi_1, \hdots, \pi_n)'$, where $\pi_i =
d_i/\vol(V)$.  
Obviously there is a tight relationship between $\Lrw$ and $P$, as
$\Lrw = I - P$.  As a consequence, $\lambda$ is an eigenvalue of
$\Lrw$ with eigenvector $u$ if and only if $1 - \lambda$ is an
eigenvalue of $P$ with eigenvector $u$.  It is well known that many
properties of a graph can be expressed in terms of the corresponding
random walk transition matrix $P$, see \citeA{Lovasz93} for an
overview. From this point of view it does not come as a surprise that
the largest eigenvectors of $P$ and the smallest eigenvectors of
$\Lrw$ can be used to describe cluster properties of the graph.

\subsubsection*{Random walks and $\boldsymbol{\Ncut}$}

A formal equivalence between $\Ncut$ and transition probabilities of
the random walk has been observed in \citeA{MeiShi01}.

\begin{proposition}[$\Ncut$ via transition probabilities]
  \label{prop-meila} 
Let $G$ be connected and non bi-partite.  Assume
  that we run the random walk $(X_t)_{t \in \Nat}$ starting with $X_0$ in the
  stationary distribution $\pi$. For disjoint subsets $A, B \subset
  V$, denote by $P(B |A) := P(X_1 \in B | X_0 \in A)$. Then:
\ba
\Ncut(A,\bar A) = P(\bar A |A) + P(A|\bar A).
\ea
\end{proposition}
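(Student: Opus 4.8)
The plan is to compute the two conditional probabilities directly from the definition of the random walk and the stationary distribution. First I would recall that the chain starts in stationarity, so $\Pr(X_0 = v_i) = \pi_i = d_i/\vol(V)$, and that the one-step transition probability is $p_{ij} = w_{ij}/d_i$. For disjoint sets $A, B$, the joint probability $\Pr(X_0 \in A, X_1 \in B)$ is then $\sum_{i \in A, j \in B} \pi_i p_{ij} = \sum_{i \in A, j \in B} (d_i/\vol(V))(w_{ij}/d_i) = \frac{1}{\vol(V)}\sum_{i \in A, j \in B} w_{ij} = W(A,B)/\vol(V)$. This is the key computation and it is completely routine; the cancellation of $d_i$ is exactly why the stationary distribution makes things work out cleanly.

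Next I would divide by $\Pr(X_0 \in A) = \sum_{i \in A} \pi_i = \vol(A)/\vol(V)$ to obtain the conditional probability
\ba
P(B \mid A) = \frac{\Pr(X_0 \in A, X_1 \in B)}{\Pr(X_0 \in A)} = \frac{W(A,B)/\vol(V)}{\vol(A)/\vol(V)} = \frac{W(A,B)}{\vol(A)}.
\ea
Specializing to $B = \bar A$ gives $P(\bar A \mid A) = W(A, \bar A)/\vol(A)$, and by symmetry (using $W(\bar A, A) = W(A, \bar A)$ since $W$ is symmetric) $P(A \mid \bar A) = W(\bar A, A)/\vol(\bar A) = W(A, \bar A)/\vol(\bar A)$. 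Adding these two and comparing with the definition $\Ncut(A, \bar A) = \frac{1}{2}\left(\frac{W(A,\bar A)}{\vol(A)} + \frac{W(\bar A, A)}{\vol(\bar A)}\right) \cdot 2 = \frac{W(A, \bar A)}{\vol(A)} + \frac{W(A, \bar A)}{\vol(\bar A)}$ finishes the proof.

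There is really no hard step here — the only thing to be a little careful about is bookkeeping: making sure $\Ncut$ for $k=2$ is written out correctly (the factor $1/2$ in the general definition cancels against the fact that the sum over $i=1,2$ contributes both $A$ and $\bar A$), and making sure the disjointness of $A$ and $\bar A$ together with connectedness/non-bipartiteness (which guarantees the stationary distribution $\pi$ exists and is the one given) are invoked so that the statement is literally as claimed. If anything qualifies as the main point, it is recognizing that the stationary-start assumption is precisely what turns the unwieldy $\Pr(X_0 \in A, X_1 \in B)$ into the symmetric quantity $W(A,B)/\vol(V)$.
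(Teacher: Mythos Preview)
Your proof is correct and follows essentially the same route as the paper: compute the joint probability $\Pr(X_0\in A, X_1\in B)=W(A,B)/\vol(V)$ via the stationary distribution, divide by $\Pr(X_0\in A)=\vol(A)/\vol(V)$, and then read off the $\Ncut$ formula. The paper is slightly terser in the final step, simply invoking the definition of $\Ncut$, whereas you spell out the factor-of-$1/2$ bookkeeping explicitly.
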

\begin{proof}
First of all observe that 
\ba 
& P(X_0 \in A, X_1 \in B)
= \sum_{i \in A, j \in B} P(X_0 = i, X_1 = j) 
\; = \sum_{i \in A, j \in B} \pi_i p_{ij} \\
& = \sum_{i \in A, j \in B} \frac{d_i}{\vol(V)} \frac{w_{ij}}{d_i}
\; = \frac{1}{\vol(V)} \sum_{i \in A, j \in B} w_{ij}.
\ea
Using this we obtain 
\ba
& P(X_1 \in B | X_0 \in A)
= \frac{P(X_0 \in A, X_1 \in B)}{P(X_0 \in A)} \\
 & = 
\left( \frac{1}{\vol(V)} \sum_{i \in A, j \in B} w_{ij} \right) 
\left(\frac{\vol(A)}{\vol(V)} \right)^{-1}
= 
\frac{\sum_{i \in A, j \in B} w_{ij}}{\vol(A)}.
\ea
Now the proposition follows directly with the definition of $\Ncut$. 
\end{proof}

This proposition leads to a nice interpretation of $\Ncut$, and
hence of normalized spectral clustering. 
It tells us that when minimizing $\Ncut$, we actually look for a cut
through the graph such that a random
walk seldom transitions from $A$ to $\bar A$ and vice versa.

\subsubsection*{The commute distance}

A second connection between random walks and graph Laplacians
can be made via the commute distance on the graph.  The commute
distance (also called resistance distance) $c_{ij}$ between two
vertices $v_i$ and $v_j$ is the expected time it takes the random walk to
travel from vertex $v_i$ to vertex $v_j$ and back
\cite{Lovasz93,AldFil}. The commute distance has several nice
properties which make it particularly appealing for machine
learning. As opposed to the shortest path distance on a graph, the
commute distance between two vertices decreases if there are many
different short ways to get from vertex $v_i$ to vertex $v_j$. So instead
of just looking for the one shortest path, the commute distance looks
at the set of short paths. Points which are connected by a short path
in the graph and lie in the same high-density region of the graph are
considered closer to each other than points which are connected by a
short path but lie in different high-density regions of the graph. In
this sense, the commute distance seems particularly well-suited to be
used for clustering purposes.  \\

Remarkably, the commute distance on a graph can be computed with the
help of the generalized inverse (also called pseudo-inverse or
Moore-Penrose inverse) $L^\dagger$ of the graph Laplacian $\Lun$.  In
the following we denote $e_i = (0,\hdots0,1,0,\hdots,0)'$ as the $i$-th unit
vector.
To define the generalized inverse of
$\Lun$, recall that by Proposition \ref{prop-unnorm} the matrix $L$
can be decomposed as $L = U \Lambda U'$ where $U$ is the matrix
containing all eigenvectors as columns and $\Lambda$ the diagonal
matrix with the eigenvalues $\lambda_1,\hdots, \lambda_n$ on the
diagonal. As at least one of the eigenvalues is 0, the matrix $L$ is
not invertible. Instead, we define its generalized inverse as
$L^\dagger := U \Lambda^\dagger U'$ where the matrix
$\Lambda^\dagger$ is the diagonal matrix with diagonal entries
$1/\lambda_i$ if $\lambda_i \neq 0$ and $0$ if $\lambda_i = 0$. The
entries of $\Lun^\dagger$ can be computed as $l_ {ij}^\dagger =
\sum_{k=2}^n \frac{1}{\lambda_k} u_{ik}u_{jk}$. The matrix  $\Lun^\dagger$ is positive
semi-definite and symmetric. For further properties of $\Lun^\dagger$ see
\citeA{GutXia04}.

\begin{proposition}[Commute distance] \label{prop-commute}
Let $G=(V,E)$ a connected, undirected graph. Denote by $c_{ij}$
the commute distance between vertex $v_i$ and vertex $v_j$, and by
$\Lun^\dagger = (l_{ij}^\dagger)_{i,j=1,\hdots,n}$ the generalized
inverse of $\Lun$. Then we have: 
\ba
c_{ij} = \vol(V) (l_{ii}^\dagger - 2 l_{ij}^\dagger + l_{jj}^\dagger) 
= \vol(V) (e_i - e_j)' \Lun^\dagger (e_i - e_j).
\ea
\end{proposition}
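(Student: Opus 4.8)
The plan is to connect the commute distance to the random walk's fundamental matrix and then to $\Lun^\dagger$. First I would recall that the commute distance satisfies $c_{ij} = h_{ij} + h_{ji}$, where $h_{ij}$ is the expected hitting time from $v_i$ to $v_j$, i.e., the expected number of steps for the random walk with transition matrix $P = D^{-1}W$ started at $v_i$ to reach $v_j$. The hitting times satisfy the first-step recurrence $h_{ij} = 1 + \sum_{l} p_{il} h_{lj}$ for $i \neq j$, with $h_{jj}$ understood as $0$ on the right-hand side; equivalently, for each fixed target $j$ the vector $(h_{ij})_i$ solves a linear system involving $I - P$. The natural route is therefore to solve this system using the pseudo-inverse of $\Lun = D(I-P)$.

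The key algebraic step is the following. Fix $j$ and set $x = e_i - e_j$. I claim that $\Lun \, \Lun^\dagger x = x - \frac{\vol(A)}{\vol(V)}\charfct$-type correction; more precisely, since $G$ is connected, $\Lun^\dagger$ is the inverse of $\Lun$ on the orthogonal complement of $\charfct$, so $\Lun \Lun^\dagger = I - \frac{1}{n}\charfct\charfct'$. The cleaner approach is to work directly with potentials: introduce the function $\varphi$ on vertices defined by $\varphi = \vol(V)\,\Lun^\dagger(e_i - e_j)$, and show that $\varphi_k - \varphi_l$ equals the expected difference of hitting times in an appropriate sense. Concretely, I would verify that $\Lun \varphi = \vol(V)(e_i - e_j)$ (using that $e_i - e_j \perp \charfct$), unpack this as $d_k \varphi_k - \sum_l w_{kl}\varphi_l = \vol(V)(\mathbb{1}[k=i] - \mathbb{1}[k=j])$, i.e., $(I-P)\varphi$ has the stated form coordinatewise, and then compare with the hitting-time recurrence. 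A standard identity (e.g., from Aldous–Fill or Lovász) states that the combination $\vol(V)(l_{ii}^\dagger - 2l_{ij}^\dagger + l_{jj}^\dagger)$ is precisely $h_{ij} + h_{ji}$; I would reconstruct this by evaluating $(e_i - e_j)'\varphi = \vol(V)(e_i - e_j)'\Lun^\dagger(e_i - e_j)$ and identifying it with $c_{ij}$ via the probabilistic interpretation of $\varphi$ as (a multiple of) a hitting-time potential. The second displayed equality in the proposition is then just the expansion $(e_i - e_j)'\Lun^\dagger(e_i - e_j) = l_{ii}^\dagger - 2l_{ij}^\dagger + l_{jj}^\dagger$ together with symmetry of $\Lun^\dagger$.

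The main obstacle I expect is making the identification between the pseudo-inverse expression and the hitting times fully rigorous rather than hand-wavy: one must carefully handle the diagonal-target issue in the hitting-time recurrence (the system $h_{ij} = 1 + \sum_l p_{il}h_{lj}$ holds only for $i \neq j$), pin down the right normalization constants involving $\vol(V)$, and confirm that the solution produced by $\Lun^\dagger$ is the correct one among solutions of the singular linear system (uniqueness up to adding multiples of $\charfct$, fixed by the orthogonality $\Lun^\dagger x \perp \charfct$). A clean way to sidestep much of this is to cite the known electrical-network identity $c_{ij} = \vol(V) \cdot R_{ij}$, where $R_{ij} = (e_i-e_j)'\Lun^\dagger(e_i-e_j)$ is the effective resistance between $v_i$ and $v_j$ (see Lovász 1993), and then note that the equality $R_{ij} = l_{ii}^\dagger - 2l_{ij}^\dagger + l_{jj}^\dagger$ is immediate from the spectral definition of $\Lun^\dagger$. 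I would present the proof along these lines, giving the hitting-time/potential argument for the first equality and the trivial quadratic-form expansion for the second.
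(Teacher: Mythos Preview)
Your proposal is more than the paper itself provides: the paper does not prove this proposition at all, but simply cites \citeA{KleRan93} for a proof via electrical network theory and \citeA{FouPirRenSae07} for a proof via first-step analysis of the random walk. Your sketch in fact outlines both of these routes --- the hitting-time recurrence $(I-P)$-system solved through $\Lun^\dagger$ is exactly the first-step analysis approach, and your fallback through the effective resistance $R_{ij} = (e_i-e_j)'\Lun^\dagger(e_i-e_j)$ together with $c_{ij} = \vol(V)\,R_{ij}$ is the electrical-network route. So there is no discrepancy in approach; you are simply filling in what the tutorial deliberately leaves to references.

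One small remark on the sketch itself: the step where you write ``$\Lun\,\Lun^\dagger x = x - \tfrac{\vol(A)}{\vol(V)}\charfct$-type correction'' is garbled (there is no set $A$ in play here), though you immediately recover with the correct statement $\Lun\Lun^\dagger = I - \tfrac{1}{n}\charfct\charfct'$. The identification of $\varphi = \vol(V)\Lun^\dagger(e_i-e_j)$ with the hitting-time potential, and the handling of the boundary condition at the target vertex, are indeed the places where care is needed, and you correctly flag them as the main obstacles. If you were writing this out in full, the cleanest path is the one you mention last: establish $c_{ij} = \vol(V)R_{ij}$ (which is the classical Chandra--Raghavan--Ruzzo--Smolensky--Tiwari / Nash-Williams identity) and then read off $R_{ij}$ from the quadratic form in $\Lun^\dagger$.
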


This result has been published by \citeA{KleRan93}, where it has been
proved by methods of electrical network theory. For a proof using
first step analysis for random walks see \citeA{FouPirRenSae07}. 
There also exist other ways to express the commute distance with the
help of graph Laplacians. For example a method in terms of
eigenvectors of the normalized Laplacian $\Lsym$ can be found as
Corollary 3.2 in \citeA{Lovasz93}, and a method computing the commute
distance with the help of determinants of certain sub-matrices of
$\Lun$ can be found in \citeA{BapGutXia03}.\\

Proposition \ref{prop-commute} has an important consequence. It shows
that $\sqrt{c_{ij}}$ can be considered as a Euclidean distance
function on the vertices of the graph. This means that we can
construct an embedding which maps the vertices $v_i$ of the graph on
points $z_i \in \R^n$ such that the Euclidean distances between the
points $z_i$ coincide with the commute distances on the graph. This
works as follows. As the matrix $\Lun^\dagger$ is positive
semi-definite and symmetric, it induces an inner product on $\R^n$ (or
to be more formal, it induces an inner product on the subspace of
$\R^n$ which is perpendicular to the vector $\charfct$).  Now choose
$z_i$ as the point in $\R^n$ corresponding to the $i$-th row of the
matrix $U (\Lambda^\dagger)^{1/2}$. Then, by Proposition
\ref{prop-commute} and by the construction of $\Lun^\dagger$ we have
that $\langle z_i, z_j \rangle = e_i'\Lun^\dagger
e_j$ and $c_{ij} = \vol(V) ||z_i - z_j||^2$. \\

The embedding used in unnormalized spectral clustering is related to
the commute time embedding, but not identical. In spectral clustering,
we map the vertices of the graph on the rows $y_i$ of the matrix $U$,
while the commute time embedding maps the vertices on the rows $z_i$
of the matrix $ (\Lambda^\dagger)^{1/2} U $. That is, compared to the
entries of $y_i$, the entries of $z_i$ are additionally scaled by the
inverse eigenvalues of $\Lun$. Moreover, in spectral clustering we
only take the first $k$ columns of the matrix, while the commute time
embedding takes all columns.  Several authors now try to justify why
$y_i$ and $z_i$ are not so different after all and state a bit
hand-waiving that the fact that spectral clustering constructs
clusters based on the Euclidean distances between the $y_i$ can be
interpreted as building clusters of the vertices in the graph based on
the commute distance.
However, note that both approaches can differ considerably. For example, in the optimal case
where the graph consists of
$k$  disconnected components, the first $k$
eigenvalues of $L$ are 0 according to Proposition
\ref{prop-connected-unnorm}, and the first $k$ columns of $U$ consist
of the cluster indicator vectors. However, the first $k$ columns of
the matrix $ (\Lambda^\dagger)^{1/2} U $ consist of zeros only, as the
first $k$ diagonal elements of $\Lambda^\dagger$ are $0$. In this
case, the information contained in the first $k$ columns of $U$ is
completely ignored in the matrix $ (\Lambda^\dagger)^{1/2} U $, and
all the non-zero elements of the matrix $ (\Lambda^\dagger)^{1/2} U $
which can be found in columns $k+1$ to $n$ are not taken into account
in spectral clustering, which discards all those columns. 
On the other hand, those problems do not occur if the underlying graph
is connected. In this case, the only eigenvector with eigenvalue $0$
is the constant one vector, which can be ignored in both cases. The
eigenvectors corresponding to small eigenvalues $\lambda_i$ of $L$ are
then stressed in the matrix $ (\Lambda^\dagger)^{1/2} U $ as they are
multiplied by $\lambda_i^\dagger = 1/\lambda_i$. In such a situation,
it might be true that the commute time embedding and the spectral
embedding do similar things.\\

All in all, it seems that the commute time distance can be a helpful
intuition, but without making further assumptions there is only a
rather loose relation between spectral clustering and the commute
distance. It might be possible that those relations can be tightened,
for example if the similarity function is strictly positive
definite. However, we have not yet seen a precise mathematical
statement about this.

\section{Perturbation theory point of view} \label{sec-perturbation}

Perturbation theory studies the question of how eigenvalues and
eigenvectors of a matrix $A$ change if we add a small perturbation
$H$, that is we consider the perturbed matrix $\tilde A := A + H$.
Most perturbation theorems state that a certain distance between
eigenvalues or eigenvectors of $A$ and $\tilde A$ is bounded by a
constant times a norm of $H$. The constant usually depends on which
eigenvalue we are looking at, and how far this eigenvalue is separated
from the rest of
the spectrum (for a formal statement see below). 
The justification of spectral clustering is then the following: Let us
first consider the ``ideal case'' where the between-cluster similarity
is exactly 0. We have seen in Section \ref{sec-laplacians} that then
the first $k$ eigenvectors of $\Lun$ or $\Lrw$ are the indicator
vectors of the clusters. In this case, the points $y_i \in \R^k$
constructed in the spectral clustering algorithms have the form
$(0,\hdots,0,1,0,\hdots0)'$ where the position of the 1 indicates the
connected component this point belongs to. In particular, all $y_i$
belonging to the same connected component coincide. The $k$-means
algorithm will trivially find the correct partition by placing a
center point on each of the points $(0,\hdots,0,1,0,\hdots0)' \in
\R^k$.
In a ``nearly ideal case'' where we still have distinct clusters, but
the between-cluster similarity is not exactly 0, we consider the
Laplacian matrices to be perturbed versions of the ones of the ideal
case. Perturbation theory then tells us that the eigenvectors will be
very close to the ideal indicator vectors. The points $y_i$ might not
completely coincide with $(0,\hdots,0,1,0,\hdots0)'$, but do so up to
some small error term. Hence, if the perturbations are not too large,
then $k$-means algorithm will still separate the groups from each
other. \\

\subsection{The formal perturbation argument} 

The formal basis for the perturbation approach to spectral clustering is the 
Davis-Kahan theorem from matrix perturbation theory. This theorem
bounds the difference between eigenspaces of symmetric matrices under
perturbations. We state those results for completeness, but for
background reading we refer to Section V of \citeA{SteSun90} and
Section VII.3 of \citeA{Bhatia97}.
In perturbation theory, distances between subspaces are usually
measured using ``canonical angles'' (also called ``principal
angles''). To define principal angles, let $\Vcal_1$ and $\Vcal_2$ be
two $p$-dimensional subspaces of $\R^d$, and $V_1$ and $V_2$ two
matrices such that their columns form orthonormal systems for
$\Vcal_1$ and $\Vcal_2$, respectively. Then the cosines $\cos
\Theta_i$ of the principal angles $\Theta_i$ are the singular values
of $V_1'V_2$.  For $p=1$, the so defined canonical angles coincide
with the normal definition of an angle. Canonical angles can also be
defined if $\Vcal_1$ and $\Vcal_2$ do not have the same dimension, see
Section V of \citeA{SteSun90}, Section VII.3 of \citeA{Bhatia97}, or
Section 12.4.3 of \citeA{GolVan96}. The matrix $\sin \mathbf
\Theta(\Vcal_1, \Vcal_2)$ will denote the diagonal matrix with the sine of the 
canonical angles on the diagonal.

\begin{theorem}[Davis-Kahan]
Let $A, H \in \R^{n \times n}$ be symmetric matrices, and let
$\norm[\cdot]$ be the Frobenius norm or the two-norm for
matrices, respectively. Consider $\tilde A := A + H$ as a perturbed version of $A$. 
Let $S_1 \subset \R$ be an interval. Denote by $\sigma_{S_1}(A)$ the set of
eigenvalues of $A$ which are contained in $S_1$, and by $V_1$ the
eigenspace corresponding to all those eigenvalues (more formally,
$V_1$ is the image of the spectral projection induced by
$\sigma_{S_1}(A)$). 
Denote by 
$\sigma_{S_1}(\tilde A)$ and $\tilde V_1$ the analogous quantities for $\tilde A$. 
Define the distance between $S_1$ and the spectrum of $A$ outside of
$S_1$ as 
\ba \delta = \min \{|\lambda - s|; \; \lambda \text{ eigenvalue of
}A, \; \lambda \not\in S_1, \; s \in S_1 \}.  \ea 
Then the distance $d(V_1,\tilde V_1):=
\norm[\sin \mathbf \Theta(V_1,\tilde V_1) ]$ between the two subspaces
 $V_1$ and $\tilde V_1$ is bounded by 
\ba
d(V_1, \tilde V_1) \leq \frac{\norm[H]}{\delta}. 
\ea
\end{theorem}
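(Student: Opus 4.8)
The plan is to derive the Davis--Kahan bound from the standard Sylvester-equation argument, which is the cleanest route and the one used in Stewart--Sun. First I would set up orthogonal projections: let $P_1$ be the orthogonal projector onto $V_1$ and $P_2$ the orthogonal projector onto the orthogonal complement of $\tilde V_1$ (equivalently onto $\tilde V_2$, the eigenspace of $\tilde A$ for eigenvalues outside $S_1$). The quantity we must control, $\norm[\sin\mathbf\Theta(V_1,\tilde V_1)]$, equals $\norm[P_2 P_1]$ (this is the standard fact that the sines of the canonical angles between $V_1$ and $\tilde V_1$ are the singular values of $P_{\tilde V_1^\perp} P_{V_1}$); I would state this identity and cite Stewart--Sun for it rather than reprove it.

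Next I would exploit the invariance of the eigenspaces. Since $V_1$ is $A$-invariant and $\tilde V_2 := \tilde V_1^\perp$ is $\tilde A$-invariant, write $A_1 := P_1 A P_1$ restricted to $V_1$ (spectrum inside $S_1$) and $\tilde A_2 := P_2 \tilde A P_2$ restricted to $\tilde V_2$ (spectrum outside $S_1$). The key computation is to evaluate $P_2 A P_1$ in two ways. On one hand, using $A P_1 = P_1 A P_1$ (invariance of $V_1$) we get $P_2 A P_1 = P_2 P_1 A_1$ in the appropriate block sense. On the other hand, using $\tilde A = A + H$ and $P_2 \tilde A = P_2 \tilde A P_2$ (invariance of $\tilde V_2$) we get $P_2 A P_1 = P_2 \tilde A P_1 - P_2 H P_1 = \tilde A_2 P_2 P_1 - P_2 H P_1$. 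Subtracting yields the Sylvester-type relation $\tilde A_2 (P_2 P_1) - (P_2 P_1) A_1 = P_2 H P_1$, so the operator $X := P_2 P_1$ satisfies $\tilde A_2 X - X A_1 = P_2 H P_1$.

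The final step is the separation estimate for Sylvester operators: the linear map $X \mapsto \tilde A_2 X - X A_1$ acting on operators from $V_1$ to $\tilde V_2$ is invertible with inverse norm at most $1/\delta'$, where $\delta'$ is the gap between $\sigma(\tilde A_2)$ and $\sigma(S_1)$. Here I would invoke that $\sigma(\tilde A_2) \subset \R\setminus S_1$ and — this is the subtle point — that the hypothesis gives $\delta$ as the gap between $S_1$ and $\sigma(A)\setminus S_1$, but we actually need the gap to $\sigma(\tilde A_2)$. One shows $\delta' \geq \delta$ is not automatic; rather, the theorem as stated is fine because the eigenvalues of $\tilde A$ outside $S_1$ can be pushed back inside, but in the standard formulation one assumes the spectra are genuinely separated by $S_1$, so I would note that the relevant separation quantity is exactly $\delta$ under the implicit assumption that $\sigma_{S_1}(\tilde A)$ and the complementary part of $\sigma(\tilde A)$ lie on opposite sides of the endpoints of $S_1$. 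Granting this, $\norm[X] = \norm[P_2 P_1] \leq \norm[P_2 H P_1]/\delta \leq \norm[H]/\delta$, and combined with the identity $d(V_1,\tilde V_1) = \norm[P_2 P_1]$ this is exactly the claimed bound. The norm inequality $\norm[P_2 H P_1] \leq \norm[H]$ holds for both the two-norm and the Frobenius norm since orthogonal projectors are contractions in both. I expect the main obstacle to be stating the separation hypothesis correctly so that $\delta$ really is the constant that appears — the Sylvester-operator norm bound and the projector identity are standard and I would cite them.
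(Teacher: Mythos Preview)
The paper does not actually prove this theorem: immediately after the statement it simply refers the reader to Section~V.3 of Stewart--Sun for a proof, and then moves on to interpret the result for graph Laplacians. Your Sylvester-equation outline is precisely the argument one finds there, so in that sense your proposal agrees with the route the paper points to.

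One small remark on the subtlety you already flagged. With your choice of projectors (onto $V_1$ and onto $\tilde V_1^\perp$) the Sylvester gap that appears is between $\sigma(A)\cap S_1$ and $\sigma(\tilde A)\setminus S_1$, which is not literally the $\delta$ in the statement. The clean fix is to swap the roles: project onto $\tilde V_1$ and onto $V_1^\perp$, obtaining $A_2 X - X\tilde A_1 = P_{V_1^\perp} H P_{\tilde V_1}$, where now $\sigma(A_2)=\sigma(A)\setminus S_1$ and $\sigma(\tilde A_1)\subset S_1$, so the Sylvester separation is exactly $\geq \delta$ by the definition given. Since $\norm[\sin\mathbf\Theta(V_1,\tilde V_1)] = \norm[\sin\mathbf\Theta(\tilde V_1,V_1)]$, this yields the stated bound without any extra hypothesis.
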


For a discussion and proofs of this theorem see for example Section V.3
of \citeA{SteSun90}.  Let us try to decrypt this theorem, for
simplicity in the case of the unnormalized Laplacian (for the
normalized Laplacian it works analogously). The matrix $A$ will
correspond to the graph Laplacian $L$ in the ideal case where the
graph has $k$ connected components. The matrix $\tilde A$ corresponds
to a perturbed case, where due to noise the $k$ components in the
graph are no longer completely disconnected, but they are only
connected by few edges with low weight. We denote the corresponding
graph Laplacian of this case by $\tilde \Lun$. For spectral clustering
we need to consider the first $k$ eigenvalues and eigenvectors of
$\tilde \Lun$. Denote the eigenvalues of $\Lun$ by $\lambda_1, \hdots
\lambda_n$ and the ones of the perturbed Laplacian $\tilde \Lun$ by
$\tilde \lambda_1,\hdots, \tilde \lambda_n$.  Choosing the interval
$S_1$ is now the crucial point. We want to choose it such that both
the first $k$ eigenvalues of $\tilde \Lun$ and the first $k$
eigenvalues of $\Lun$ are contained in $S_1$. This is easier the
smaller the perturbation $H = \Lun - \tilde \Lun$ and the larger the
eigengap $|\lambda_k - \lambda_{k+1}|$ is. If we manage to find such a
set, then the Davis-Kahan theorem tells us that the eigenspaces
corresponding to the first $k$ eigenvalues of the ideal matrix $\Lun$ and
the first $k$ eigenvalues of the perturbed matrix $\tilde \Lun$ are very
close to each other, that is their distance is bounded by
$\norm[H]/\delta$. Then, as the eigenvectors in the ideal case are
piecewise constant on the connected components, the same will
approximately be true in the perturbed case. How good
``approximately'' is depends on the norm of the perturbation
$\norm[H]$ and the distance $\delta$ between $S_1$ and the $(k+1)$st
eigenvector of $\Lun$.  If the set $S_1$ has been chosen as the
interval $[0,\lambda_k]$, then $\delta$ coincides with the spectral
gap $|\lambda_{k+1} - \lambda_k|$. We can see from the theorem that
the larger this eigengap is, the closer the eigenvectors of the ideal
case and the perturbed case are, and hence the better spectral
clustering works.  Below we will see that the size of the eigengap can
also be used in a different context as a quality criterion for
spectral clustering, namely when choosing the number $k$ of clusters
to construct.\\

If the perturbation $H$ is too large or the eigengap is too small, we
might not find a set $S_1$ such that both the first $k$ eigenvalues of
$\Lun$ and $\tilde \Lun$ are contained in $S_1$. In this case, we need
to make a compromise by choosing the set $S_1$ to contain the first
$k$ eigenvalues of $\Lun$, but maybe a few more or less eigenvalues
of $\tilde \Lun$. The statement of the theorem then becomes weaker in
the sense that either we do not compare the eigenspaces corresponding
to the first $k$ eigenvectors of $L$ and $\tilde L$, but the
eigenspaces corresponding to the first $k$ eigenvectors of $L$ and the
first $\tilde k$ eigenvectors of $\tilde L$ (where $\tilde k$ is the
number of eigenvalues of $\tilde L$ contained in $S_1$). Or, it
can happen that $\delta$ becomes so small that the bound on the
distance between $d(V_1, \tilde V_1)$ blows up so much that it becomes useless. 

\subsection{Comments about the perturbation approach}

A bit of caution is needed when using perturbation theory arguments to
justify clustering algorithms based on eigenvectors of matrices. In
general, {\em any} block diagonal symmetric matrix has the property that
there exists a basis of eigenvectors which are zero outside the
individual blocks and real-valued within the blocks. For example,
based on this argument several authors use the eigenvectors of the
similarity matrix $S$ or adjacency matrix $W$ to discover
clusters.
However, being block diagonal in the ideal case of completely
separated clusters can be considered as a necessary condition for a
successful use of eigenvectors, but not a sufficient one. 
At least two more properties should be satisfied: \\

First, we need to make sure that the {\em order} of the eigenvalues
and eigenvectors is meaningful. In case of the Laplacians this is
always true, as we know that any connected component possesses exactly
one eigenvector which has eigenvalue 0. Hence, if the graph has $k$
connected components and we take the first $k$ eigenvectors of the
Laplacian, then we know that we have exactly one eigenvector per
component. However, this might not be the case for other matrices such
as $S$ or $W$. For example, it could be the case that the two largest
eigenvalues of a block diagonal similarity matrix $S$ come from the
same block.  In such a situation, if we take the first $k$
eigenvectors of $S$, some blocks will be represented several times,
while there are other blocks which we will miss completely (unless we
take certain precautions). This is the reason why using the
eigenvectors of
$S$ or $W$ for clustering should be discouraged.\\

The second property is that in the ideal case, the entries of
the eigenvectors on the components should be ``safely bounded away''
from 0.  
Assume that an eigenvector on the first connected component has an
entry $u_{1,i} > 0$ at position $i$. In the ideal case, the fact
that this entry is non-zero indicates that the corresponding point $i$
belongs to the first cluster. The other way round, if a point $j$ does
not belong to cluster 1, then in the ideal case it should be the case
that $u_{1,j}=0$.  Now consider the same situation, but with perturbed
data. The perturbed eigenvector $\tilde u$ will
usually not have any non-zero component any more; but if the noise is
not too large, then perturbation theory tells us that the entries 
$\tilde u_{1,i}$ and $\tilde u_{1,j}$ are still ``close'' to their
original values $u_{1,i}$ and $u_{1,j}$. So both entries $\tilde
u_{1,i}$ and $\tilde u_{1,j}$ will take some small values, say
$\eps_1$ and $\eps_2$. In practice, if those values are very small it is unclear how we
should interpret this situation.  Either we believe that small entries
in $\tilde u$ indicate that the points do not belong to the first
cluster (which then misclassifies the first data point $i$), or we
think that the entries already indicate class membership and classify
both points to the first cluster (which misclassifies point $j$).\\

For both matrices $\Lun$ and $\Lrw$, the eigenvectors in the ideal
situation are indicator vectors, so the second problem described above
cannot occur.  However, this is not true for the matrix $\Lsym$, which
is used in the normalized spectral clustering algorithm of
\citeA{NgJorWei02}.  Even in the ideal case, the eigenvectors of this
matrix are given as $D^{1/2}\charfct_{A_i}$.  If the degrees of the
vertices differ a lot, and in particular if there are vertices which
have a very low degree, the corresponding entries in the eigenvectors
are very small.  To counteract the problem described above, the
row-normalization step in the algorithm of \citeA{NgJorWei02} comes into
play. In the ideal case, the matrix $U$ in the algorithm has exactly
one non-zero entry per row. After row-normalization, the matrix $T$ in
the algorithm of \citeA{NgJorWei02} then
consists of the cluster indicator vectors. Note however, that this might not always work out correctly in practice.  Assume that we have $\tilde
u_{i,1} = \eps_1$ and $\tilde u_{i,2} = \eps_2$.  If we now normalize
the $i$-th row of $U$, both $\eps_1$ and $\eps_2$ will be multiplied
by the factor of $1/\sqrt{\eps_1^2 + \eps_2^2}$ and become rather
large.  
We now run into a similar problem as described
above: both points are likely to be classified into the same cluster,
even though they belong to different clusters. 
This argument shows that spectral clustering using the matrix $\Lsym$
can be problematic if the eigenvectors contain particularly small
entries. On the other hand, note that such small entries in the eigenvectors
only occur if some of the vertices have a particularly low degrees (as
the eigenvectors of $\Lsym$ are given by $D^{1/2}\charfct_{A_i}$). One
could argue that in such a case, the data point should be considered
an outlier anyway, and then it does not really matter in which cluster the point will end up. \\

To summarize, the conclusion is that both unnormalized spectral clustering and
normalized spectral clustering with $\Lrw$ are well justified by the
perturbation theory approach. Normalized spectral clustering with
$\Lsym$ can also be justified by perturbation theory, but it should be
treated with more care if the graph contains vertices with
very low degrees. \\

\section{Practical details} \label{sec-inpractice}

In this section we will briefly discuss some of the issues which come
up when actually implementing spectral clustering. There are several
choices to be made and parameters to be set. However, the discussion
in this section is mainly meant to raise awareness about the general
problems which an occur. For thorough studies on the behavior of
spectral clustering for various real world tasks we refer to the
literature.

\subsection{Constructing the similarity graph}

Constructing the similarity graph for spectral clustering is not a
trivial task, and little is known on theoretical implications of the
various constructions.

\subsubsection*{The similarity function itself}

Before we can even think about constructing a similarity graph, we
need to define a similarity function on the data. As we are going to
construct a neighborhood graph later on, we need to make sure that the
local neighborhoods induced by this similarity function are
``meaningful''. This means that we need to be sure that points which
are considered to be ``very similar'' by the similarity function are
also closely related in the application the data comes from. For
example, when constructing a similarity function between text
documents it makes sense to check whether documents with a high
similarity score indeed belong to the same text category. The global
``long-range'' behavior of the similarity function is not so important
for spectral clustering --- it does not really matter whether two data
points have similarity score 0.01 or 0.001, say, as we will not
connect those two points in the similarity graph anyway. In the common
case where the data points live in the Euclidean space $\R^d$, a
reasonable default candidate is the Gaussian similarity function
$s(x_i,x_j) = \exp(- \norm[x_i - x_j]^2/(2\sigma^2))$ (but of course we
need to choose the parameter $\sigma$ here, see below). Ultimately,
the choice of the similarity function depends on the domain the data
comes from, and no general advice can be given.

\subsubsection*{Which type of similarity graph}

\begin{figure}[bt]
\begin{center}
\includegraphics[height=0.2\textheight]{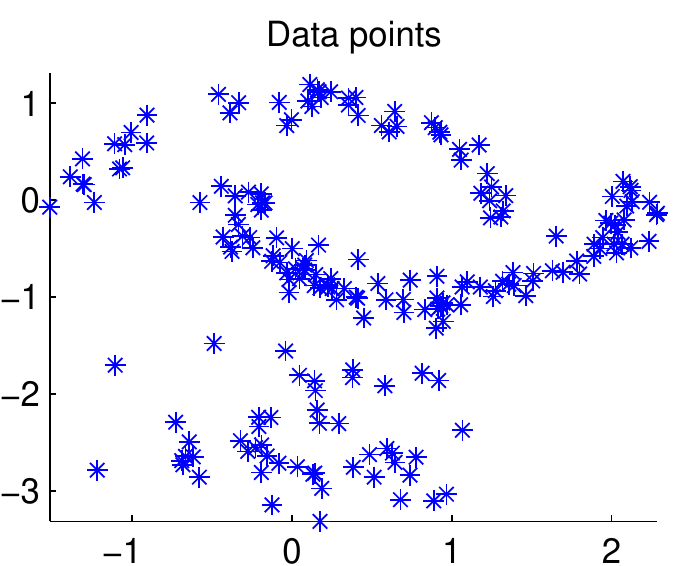}
\includegraphics[height=0.2\textheight]{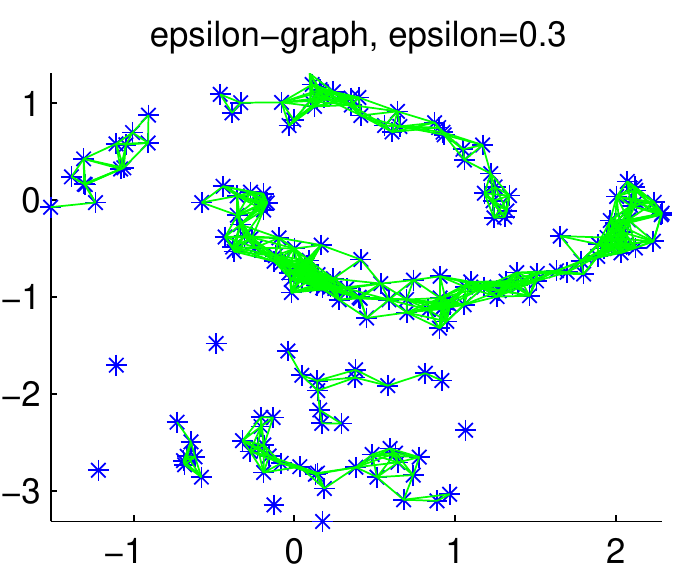}\\
\includegraphics[height=0.2\textheight]{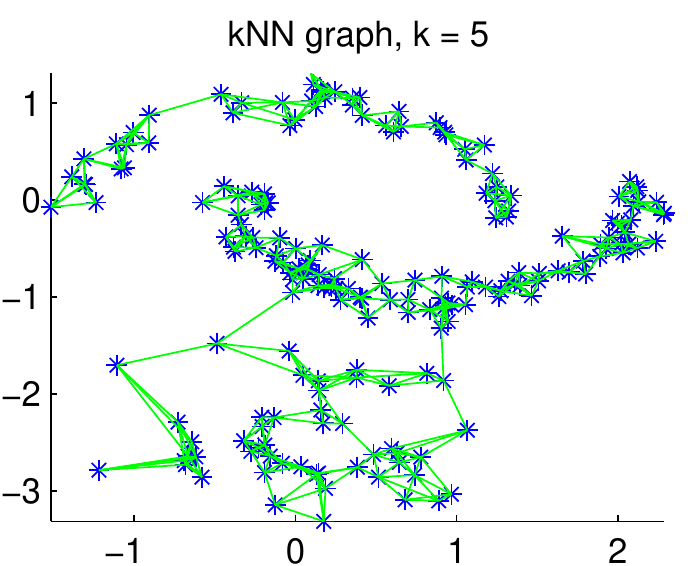}
\includegraphics[height=0.2\textheight]{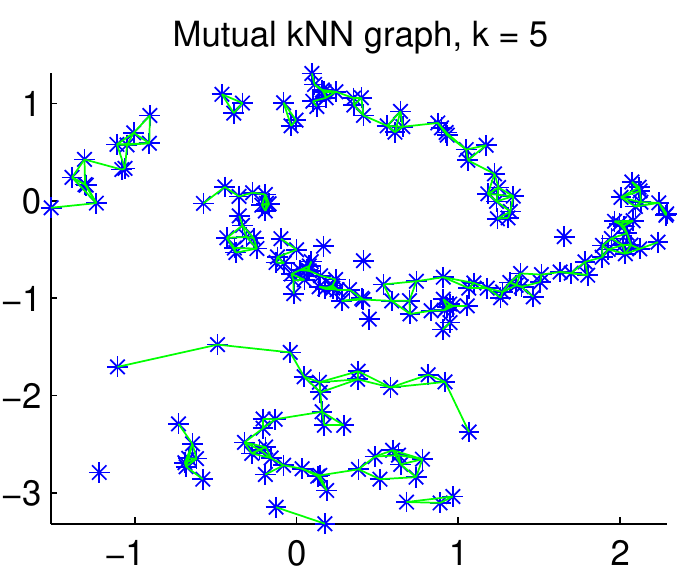}
\end{center}
\caption{Different similarity graphs, see text for details.}
\label{fig-similarity-graphs}
\end{figure}

The next choice one has to make concerns the type of the graph one
wants to use, such as the $k$-nearest neighbor or the
$\eps$-neighborhood graph. Let us illustrate the behavior of the
different graphs using the toy example presented in Figure
\ref{fig-similarity-graphs}. As underlying distribution we choose a
distribution on $\R^2$ with three clusters: two ``moons'' and a
Gaussian. The density of the bottom moon is chosen to be larger than
the one of the top moon. The upper left panel in Figure
\ref{fig-similarity-graphs} shows a sample drawn from this
distribution. The next three panels show the different similarity
graphs on this sample.\\

In the $\eps$-neighborhood graph, we can see that it is difficult to
choose a useful parameter $\eps$. With $\eps = 0.3$ as in the figure, the points on
the middle moon are already very tightly connected, while the points
in the Gaussian are barely connected. This problem always occurs if we
have data ``on different scales'', that is the distances between data
points are different in different regions of the space.\\

The $k$-nearest neighbor graph, on the other hand, can connect points
``on different scales''. We can see that points in the low-density
Gaussian are connected with points in the high-density moon. This is a
general property of $k$-nearest neighbor graphs which can be very
useful. We can also see that the $k$-nearest neighbor graph can break 
into several disconnected components if there are high density regions
which are reasonably far away from each other. This is the case for
the two moons in this example. \\

The mutual $k$-nearest neighbor graph has the property that it tends
to connect points within regions of constant density, but does not
connect regions of different densities with each other.  So the
mutual $k$-nearest neighbor graph can be considered as being ``in between''
the $\epsilon$-neighborhood graph and the $k$-nearest neighbor graph.
It is able to act on different scales, but does not mix those scales
with each other. Hence, the mutual $k$-nearest neighbor graph seems
particularly well-suited if we want to detect clusters of different
densities. \\

The fully connected graph is very often used in connection with the
Gaussian similarity function $s(x_i,x_j) = \exp(-
\norm[x_i-x_j]^2/ (2\sigma^2))$. Here the parameter $\sigma$ plays a
similar role as the parameter $\eps$ in the $\eps$-neighborhood graph.
Points in local neighborhoods are connected with relatively high
weights, while edges between far away points have positive, but
negligible weights. However, the resulting similarity matrix is not a
sparse matrix.\\

As a general recommendation we suggest to work with the $k$-nearest
neighbor graph as the first choice. It is simple to work with, results
in a sparse adjacency matrix $W$, and in our experience is less
vulnerable to unsuitable choices of parameters than the other graphs.

\subsubsection*{The parameters of the similarity graph} 

Once one has decided for the type of the similarity graph, one has to
choose its connectivity parameter $k$ or $\epsilon$, respectively.
Unfortunately, barely any theoretical results are known to guide us in
this task. 
In general, if the similarity graph contains more connected components
than the number of clusters we ask the algorithm to detect, then spectral
clustering will trivially return connected components as
clusters. Unless one is perfectly sure that those connected components
are the correct clusters, one should make sure that the
similarity graph is connected, or only consists of ``few'' connected
components and very few or no isolated vertices. There are many
theoretical results on how connectivity of random graphs can be
achieved, but all those results only hold in the limit for the sample
size $n \to \infty$. For example, it is known that for $n$ data points
drawn i.i.d. from some underlying density with a connected support
in $\R^d$, the $k$-nearest neighbor graph and the mutual
$k$-nearest neighbor graph will be connected if we choose $k$ on the
order of $\log(n)$ (e.g., \citeNP{BriChaQuiYuk97}).  Similar arguments
show that the parameter $\eps$ in the $\eps$-neighborhood graph has to
be chosen as $(\log(n)/n )^d$ to guarantee connectivity in the limit
(\citeNP{Penrose99}).  
While being of
theoretical interest, all those results do not really help us for
choosing $k$ on a finite sample.\\

Now let us give some rules of thumb. When working with the $k$-nearest
neighbor graph, then the connectivity parameter should be chosen such
that the resulting graph is connected, or at least has significantly
fewer connected components than clusters we want to detect. %
For small or medium-sized graphs this can
be tried out "by foot". For very large graphs, a first approximation
could be to choose $k$ in the order of $\log(n)$, as suggested by the
asymptotic connectivity results. \\

For the mutual $k$-nearest neighbor graph, we have to admit that we
are a bit lost for rules of thumb. The advantage of the mutual
$k$-nearest neighbor graph compared to the standard $k$-nearest
neighbor graph is that it tends not to connect areas of different
density. While this can be good if there are clear clusters induced by
separate high-density areas, this can hurt in less obvious situations
as disconnected parts in the graph will always be chosen to be
clusters by spectral clustering. Very generally, one can observe that
the mutual $k$-nearest neighbor graph has much fewer edges than the
standard $k$-nearest neighbor graph for the same parameter $k$. This
suggests to choose $k$ significantly larger for the mutual $k$-nearest
neighbor graph than one would do for the standard $k$-nearest neighbor
graph. %
However, to take advantage of the property that the
mutual $k$-nearest neighbor graph does not connect regions of
different density, it would be necessary to allow for several ``meaningful''
disconnected parts of the graph. Unfortunately, we do not know of any general
heuristic to choose the parameter $k$ such that this can be achieved. \\

For the $\eps$-neighborhood graph, we suggest to choose $\eps$ such
that the resulting graph is safely connected. To determine the
smallest value of $\eps$ where the graph is connected is very simple:
one has to choose $\eps$ as the length of the longest edge in a
minimal spanning tree of the fully connected graph on the data
points. The latter can be determined easily by any minimal spanning
tree algorithm. However, note that when the data contains outliers
this heuristic will choose $\eps$ so large that even the outliers are
connected to the rest of the data. A similar effect happens when the
data contains several tight clusters which are very far apart from
each other. In both cases, $\eps$ will be chosen too large to reflect
the scale of the most important part of the data. \\

Finally, if one uses a fully connected graph together with a similarity
function which can be scaled itself, for example the Gaussian
similarity function, then the scale of the similarity function should
be chosen such that the resulting graph has similar properties as a
corresponding $k$-nearest neighbor or $\eps$-neighborhood graph would
have. One needs to make sure that for most data points the set of
neighbors with a similarity significantly larger than 0 is ``not too
small and not too large''. In particular, for the Gaussian similarity
function several rules of thumb are frequently used. For example, one
can choose $\sigma$ in the order of the mean distance of a point to
its $k$-th nearest neighbor, where $k$ is chosen similarly as above
(e.g., $k \sim \log(n) + 1 $ ). Another way is to determine $\eps$ by the
minimal spanning tree heuristic described above, and then choose
$\sigma = \eps$. But note that all those rules of thumb are
very ad-hoc, and depending on the given data at hand and its
distribution of inter-point distances they might not work at all.\\

In general, experience shows that spectral clustering can be quite
sensitive to changes in the similarity graph and to the choice of its
parameters.  Unfortunately, to our knowledge there has been no
systematic study which investigates the effects of the similarity
graph and its parameters on clustering and comes up with
well-justified rules of thumb. None of the recommendations above is
based on a firm theoretic ground. Finding rules which have a
theoretical justification should be considered an interesting and
important topic for future research. \\

\subsection{Computing the eigenvectors} \label{ssec-computation}
To implement spectral clustering in practice one has to compute the
first $k$ eigenvectors of a potentially large graph Laplace matrix.
Luckily, if we use the $k$-nearest neighbor graph or the
$\eps$-neighborhood graph, then all those matrices are sparse.
Efficient methods exist to compute the first eigenvectors of sparse
matrices, the most popular ones being the power method or Krylov
subspace methods such as the Lanczos method \cite{GolVan96}. The
speed of convergence of those algorithms depends on the size of the
eigengap (also called spectral gap) $\gamma_k
= |\lambda_k - \lambda_{k+1}|$. The larger this eigengap is, the
faster the algorithms computing the first $k$ eigenvectors converge. \\

Note that a general problem occurs if one of the eigenvalues under
consideration has multiplicity larger than one. For example, in the
ideal situation of $k$ disconnected clusters, the eigenvalue $0$ has
multiplicity $k$. As we have seen, in this case the eigenspace is
spanned by the $k$ cluster indicator vectors. But unfortunately, the
vectors computed by the numerical eigensolvers do not necessarily
converge to those particular vectors.  Instead they just converge to
some orthonormal basis of the eigenspace, and it usually depends on
implementation details to which basis exactly the algorithm converges.
But this is not so bad after all. Note that all vectors in the space
spanned by the cluster indicator vectors $\charfct_{A_i}$ have the
form $u = \sum_{i=1}^k a_i \charfct_{A_i}$ for some coefficients
$a_i$, that is, they are piecewise constant on the clusters. So the
vectors returned by the eigensolvers still encode the information
about the clusters, which can then be used by the $k$-means
algorithm to reconstruct the clusters. \\

\subsection{The number of clusters} 

Choosing the number $k$ of clusters is a general problem for all
clustering algorithms, and a variety of more or less successful
methods have been devised for this problem. In model-based clustering
settings there exist well-justified criteria to choose the number of
clusters from the data. Those criteria are usually based on the
log-likelihood of the data, which can then be treated in a frequentist
or Bayesian way, for examples see \citeA{FraRaf02}. In settings where
no or few assumptions on the underlying model are made, a large
variety of different indices can be used to pick the number of
clusters. Examples range from ad-hoc measures such as the ratio of
within-cluster and between-cluster similarities, over
information-theoretic criteria (\citeNP{StiBia04}), the gap statistic
(\citeNP{TibWalHas01}), to stability approaches (\citeNP{BenEliGuy02};
\citeNP{LanRotBraBuh04}; \citeNP{BenLuxPal06}).  Of course all those
methods can also be used for spectral clustering. Additionally, one
tool which is particularly designed for spectral clustering is the
eigengap heuristic, which can be used for all three graph
Laplacians. Here the goal is to choose the number $k$ such that all
eigenvalues $\lambda_1, \hdots, \lambda_k$ are very small, but
$\lambda_{k+1}$ is relatively large.
There are several justifications for this procedure. The first one is
based on perturbation theory, where we observe that in the ideal case
of $k$ completely disconnected clusters, the eigenvalue $0$ has
multiplicity $k$, and then there is a gap to the $(k+1)$th eigenvalue
$\lambda_{k+1} > 0$.  Other explanations can be given by spectral
graph theory. Here, many geometric invariants of the graph can be
expressed or bounded with the help of the first eigenvalues of the
graph Laplacian. In particular, the sizes of cuts are closely related
to the size of the first eigenvalues. For more details on this topic
we refer to \citeA{Bolla91}, \citeA{Mohar97} and \citeA{Chung97}. \\

We would like to illustrate the eigengap heuristic on our toy example
introduced in Section \ref{sec-algorithms}. For this purpose we
consider similar data sets as in Section \ref{sec-algorithms}, but to
vary the difficulty of clustering we consider the Gaussians with
increasing variance. The first row of Figure \ref{fig-choosing-k}
shows the histograms of the three samples. We construct the 10-nearest
neighbor graph as described in Section \ref{sec-algorithms}, and plot
the eigenvalues  of the normalized Laplacian $\Lrw$ on
the different samples (the results for the unnormalized Laplacian are
similar). The first data set consists of four well separated clusters,
and we can see that the first 4 eigenvalues are approximately 0. Then
there is a gap between the 4th and 5th eigenvalue, that is $|\lambda_5 -
\lambda_4|$ is relatively large.  According to the eigengap heuristic,
this gap indicates that the data set contains 4 clusters. The same
behavior can also be observed for the results of the fully connected
graph (already plotted in Figure \ref{fig-example-spectral}). So we
can see that the heuristic works well if the clusters in the data are
very well pronounced.
However, the more noisy or overlapping the clusters are, the less
effective is this heuristic.  We can see that for the second data set
where the clusters are more ``blurry'', there is still a gap between
the 4th and 5th eigenvalue, but it is not as clear to detect as in the
case before.  Finally, in the last data set, there is no well-defined
gap, the differences between all eigenvalues are approximately the
same. But on the other hand, the clusters in this data set overlap so
much that many non-parametric algorithms will have difficulties to
detect the clusters, unless they make strong assumptions on the underlying
model. In this particular example, even for a human looking at the
histogram it is not obvious what the correct number of clusters should
be.  This illustrates that, as most methods for choosing the number of
clusters, the eigengap heuristic usually works well if the data
contains very well pronounced clusters, but in ambiguous cases it also
returns ambiguous results.\\

\begin{figure}[bt]
\begin{center}
\includegraphics[height=0.14\textheight]{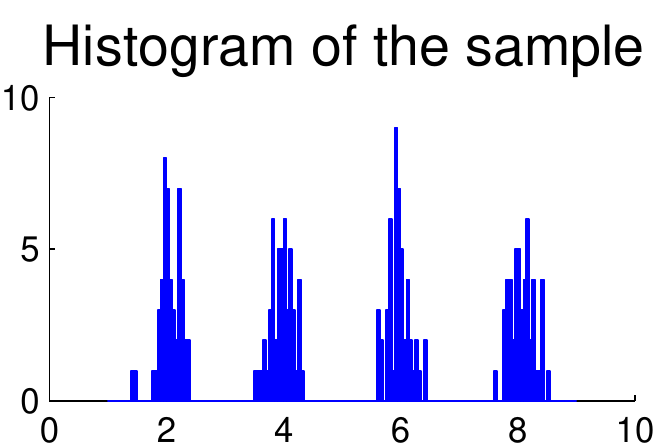}
\hfill
\includegraphics[height=0.14\textheight]{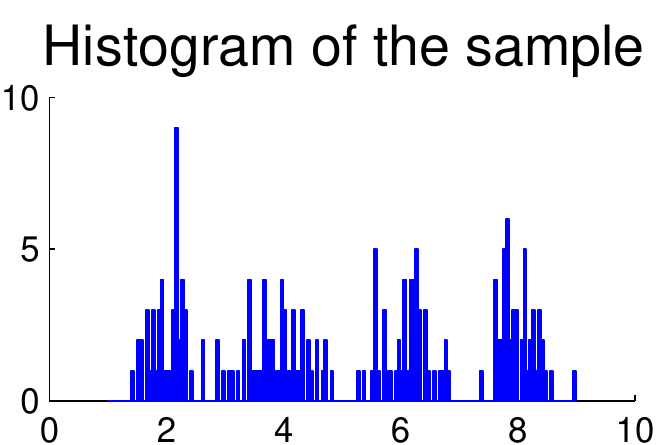}
\hfill
\includegraphics[height=0.14\textheight]{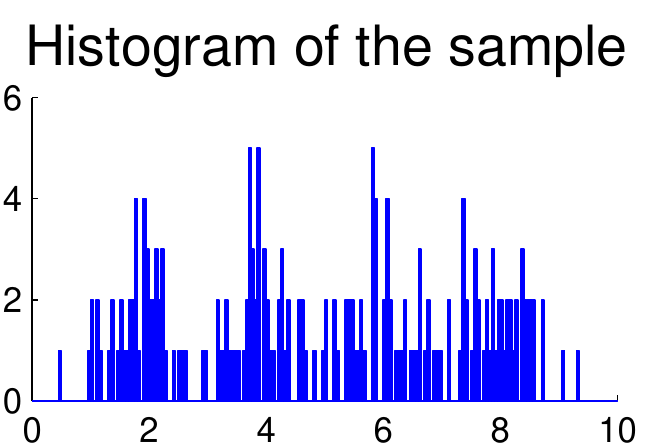}\\
\includegraphics[height=0.14\textheight]{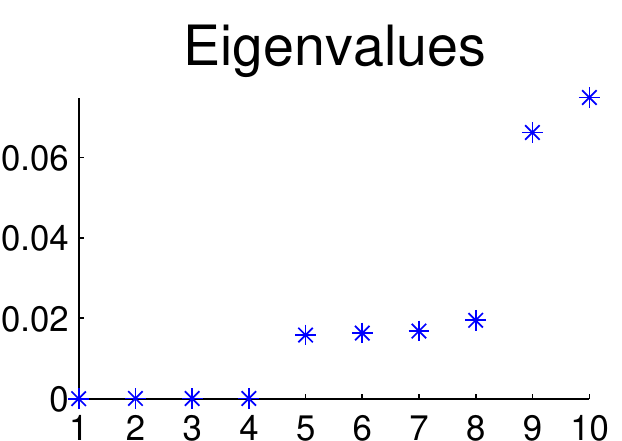}
\hfill
\includegraphics[height=0.14\textheight]{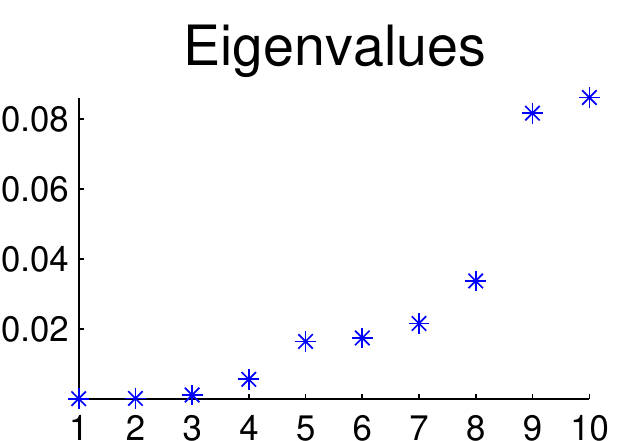}
\hfill
\includegraphics[height=0.14\textheight]{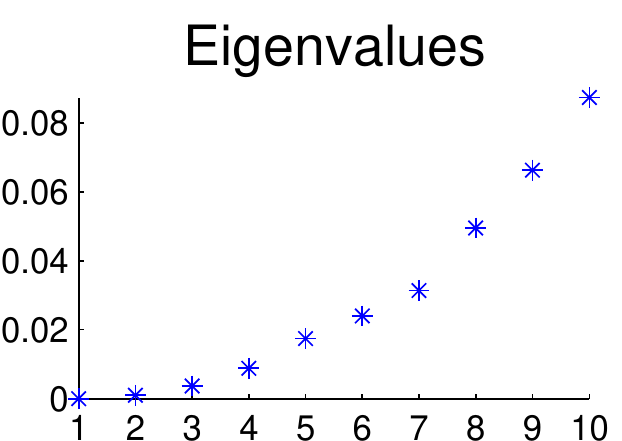}
\end{center}
\caption{Three data sets, and the smallest 10 eigenvalues of $\Lrw$.  See text for more details.}
\label{fig-choosing-k}
\end{figure}

Finally, note that the choice of the number of clusters and the choice
of the connectivity parameters of the neighborhood graph affect each
other.  For example, if the connectivity parameter of the neighborhood
graph is so small that the graph breaks into, say, $k_0$ connected
components, then choosing $k_0$ as the number of clusters is a valid
choice. However, as soon as the neighborhood graph is connected, it is
not clear how the number of clusters and the connectivity parameters
of the neighborhood graph interact. Both the choice of the number of
clusters and the choice of the connectivity parameters of the graph
are difficult problems on their own, and to our knowledge nothing
non-trivial is known on their interactions.

\subsection{The $k$-means step}

The three spectral clustering algorithms we presented in Section
\ref{sec-algorithms} use $k$-means as last step to extract the final
partition from the real valued matrix of eigenvectors. First of all,
note that there is nothing principled about using the $k$-means
algorithm in this step. In fact, as we have seen from the various
explanations of spectral clustering, this step should be very simple
if the data contains well-expressed clusters. For example, in the
ideal case if completely separated clusters we know that the
eigenvectors of $\Lun$ and $\Lrw$ are piecewise constant. In this
case, all points $x_i$ which belong to the same cluster $C_s$ are
mapped to exactly the sample point $y_i$, namely to the unit vector
$e_s \in \R^k$. In such a trivial case, any clustering algorithm
applied to the points $y_i \in \R^k$ will be able to extract the correct
clusters. \\

While it is somewhat arbitrary what clustering algorithm exactly one
chooses in the final step of spectral clustering, one can argue that
at least the Euclidean distance between the points $y_i$ is a
meaningful quantity to look at. We have seen that the Euclidean
distance between the points $y_i$ is related to the ``commute
distance'' on the graph, and in \citeA{NadLafCoi06} the authors show
that the Euclidean distances between the $y_i$ are also related to a
more general ``diffusion distance''. Also, other uses of the spectral
embeddings (e.g., \citeA{Bolla91} or \citeA{BelNiy03eigenmaps}) show
that the Euclidean distance in $\R^d$ is meaningful.\\

Instead of $k$-means, people also use other techniques to construct he
final solution from the real-valued representation. For example, in
\citeA{Lang06} the authors use hyperplanes for this purpose. A more
advanced post-processing of the eigenvectors is proposed in
\citeA{BachJordan04}. Here the authors study the subspace spanned by
the first $k$ eigenvectors, and try to approximate this subspace as
good as possible using piecewise constant vectors.  This also leads to
minimizing certain Euclidean distances in the space $\R^k$, which can
be done by some weighted $k$-means algorithm.\\

\subsection{Which graph Laplacian should be used? } \label{ssec-whichone}

A fundamental question related to spectral clustering is the question
which of the three graph Laplacians should be used to compute the
eigenvectors. Before deciding this
question, one should always look at the degree distribution of the similarity
graph. If the graph is very regular and most vertices have
approximately the same degree, then all the Laplacians are very
similar to each other, and will work equally well for clustering.
However, if the degrees in the graph are very broadly distributed, then the
Laplacians differ considerably. In our opinion, there are several
arguments which advocate for using normalized rather than unnormalized
spectral clustering, and in the normalized case to use
the eigenvectors of  $\Lrw$ rather than those of $\Lsym$. \\

\subsubsection*{Clustering objectives satisfied by the different algorithms}

The first argument in favor of normalized spectral clustering comes
from the graph partitioning point of view.  For simplicity let us
discuss the case $k=2$. In general, clustering has two different
objectives:

\begin{enumerate}
\item We want to find a partition such that points in different
  clusters are dissimilar to each other, that is we want to minimize
  the between-cluster similarity. In the graph setting, this
  means to minimize $\Cut(A,\bar A)$.
\item We want to find a partition such that points in the same cluster
  are similar to each other, that is we want to maximize the
  within-cluster similarities $W(A,A)$ and $W(\bar A, \bar A)$.
\end{enumerate}

Both $\Ratiocut$ and $\Ncut$ directly implement the first objective
by explicitly incorporating  $\Cut(A, \bar A)$ in the objective function.
However, concerning the second point, both algorithms behave
differently. Note that
\ba
W(A,A) = W(A,V) - W(A,\bar A)
= \vol(A) - \Cut(A, \bar A).
\ea
Hence, the within-cluster similarity is maximized if $\Cut(A, \bar A
)$ is small {\em and } if $\vol(A)$ is large. As this is exactly what
we achieve by minimizing $\Ncut$, the $\Ncut$ criterion
implements the second objective. This can be seen even more
explicitly by considering yet another graph cut objective function,
namely the $\MinmaxCut$ criterion introduced by \citeA{DingEtal01}:
\ba \MinmaxCut(A_1, \hdots,A_k):= \sum_{i=1}^k \frac{\Cut(A_i,\bar
  A_i)}{W(A_i,A_i)}.  \ea
Compared to $\Ncut$, which has the terms $\vol(A) = \Cut(A,\bar A) +
W(A,A)$ in the denominator, the $\MinmaxCut$ criterion only has
$W(A,A)$ in the denominator.  In practice, $\Ncut$ and $\MinmaxCut$
are often minimized by similar cuts, as a good $\Ncut$ solution will
have a small value of $\Cut(A,\bar A)$ anyway and hence the
denominators are not so different after all. Moreover, relaxing
$\MinmaxCut$ leads to exactly the same optimization problem as
relaxing $\Ncut$, namely to normalized
spectral clustering with the eigenvectors of $\Lrw$. So one can see by several ways that normalized spectral clustering incorporates both clustering objectives mentioned above.\\

Now consider the case of $\Ratiocut$. Here the objective is to maximize
$|A|$ and$|\bar A|$ instead of $\vol(A)$ and $\vol(\bar A)$. But $|A|$
and $|\bar A|$ are not necessarily related to the within-cluster similarity, as
the within-cluster similarity depends on the edges and not on the
number of vertices in $A$. For instance, just think of a set $A$ which has very many
vertices, all of which only have very low weighted edges to each other.
Minimizing $\Ratiocut$ does not attempt to maximize the
within-cluster similarity, and  the same is then true for its
relaxation by
unnormalized spectral clustering. \\

So this is our first important point to keep in mind: Normalized
spectral clustering implements both clustering objectives mentioned
above, while unnormalized spectral clustering only implements the
first objective.  \\

\subsubsection*{Consistency issues}

\begin{figure}[bt]
\begin{center}
\includegraphics[height=0.10\textheight]{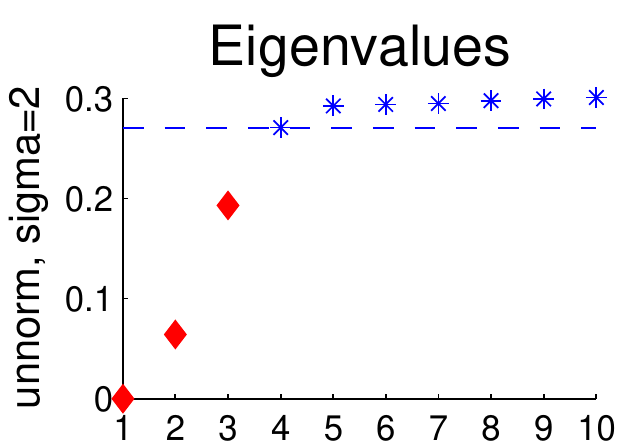}
\includegraphics[height=0.10\textheight]{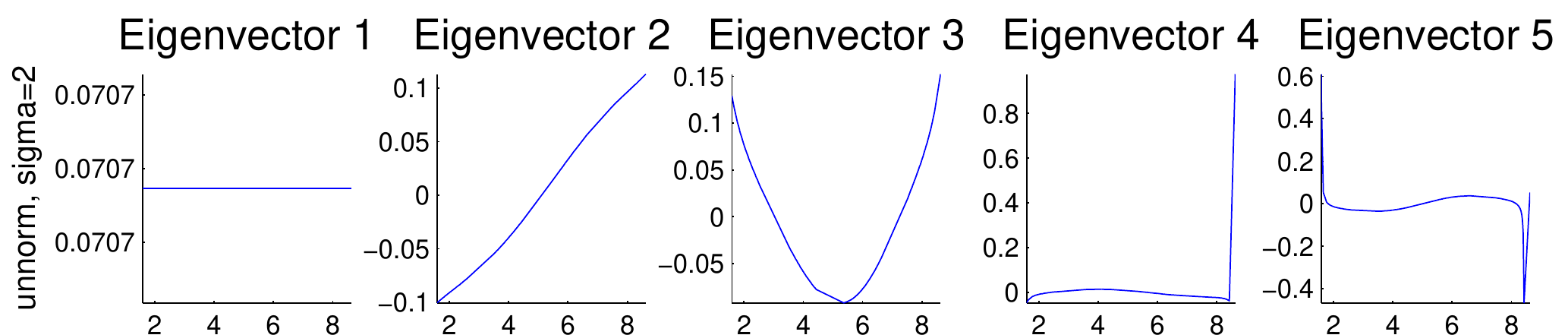}\\
\includegraphics[height=0.10\textheight]{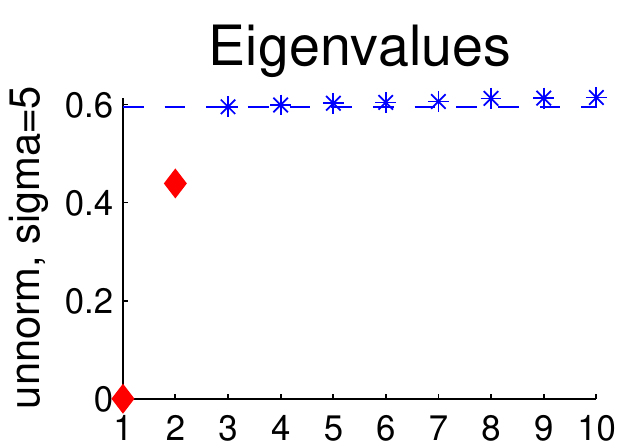}
\includegraphics[height=0.10\textheight]{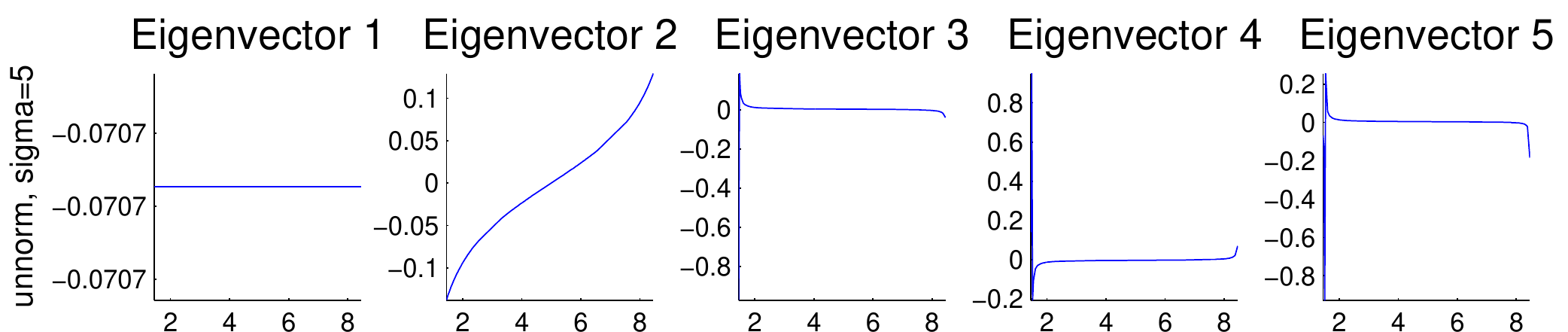}
\end{center}
\caption{Consistency of unnormalized spectral clustering. Plotted are
  eigenvalues and eigenvectors of $\Lun$, for parameter $\sigma=2$ (first
  row) and $\sigma=5$ (second row). The dashed line indicates $\min
  d_j$, the eigenvalues below $\min d_j$ are plotted as red diamonds,
  the eigenvalues above $\min d_j$ are plotted as blue stars.  See text for more details.}
\label{fig-counterex}
\end{figure}

A completely different argument for the superiority of normalized
spectral clustering comes from a statistical analysis of both
algorithms. In a statistical setting one assumes that the data
points $x_1, \hdots, x_n$ have been sampled i.i.d. according to some
probability distribution $P$ on some underlying data space
$\Xcal$. The most fundamental question is then the question of
consistency: if we draw more and more data points, do the clustering
results of spectral clustering converge to a useful partition of the
underlying space $\Xcal$?\\

For both normalized spectral clustering algorithms, it can be proved
that this is indeed the case
\cite{LuxBouBel04_colt,LuxBouBel04_nips,LuxBelBou_annofstats}. Mathematically,
one proves that as we take the limit $n \to \infty$, the matrix
$\Lsym$ converges in a strong sense to an operator $U$ on the space
$C(\Xcal)$ of continuous functions on $\Xcal$. This convergence
implies that the eigenvalues and eigenvectors of $\Lsym$ converge to
those of $U$, which in turn can be transformed to a statement about
the convergence of normalized spectral clustering. One can show that
the partition which is induced on $\Xcal$ by the eigenvectors of $U$
can be interpreted similar to the random walks interpretation of
spectral clustering. That is, if we consider a diffusion process on
the data space $\Xcal$, then the partition induced by the eigenvectors
of $U$ is such that the diffusion does not transition between the
different clusters very often \cite{LuxBouBel04_colt}. All consistency
statements about normalized spectral clustering hold, for both $\Lsym$
and $\Lrw$, under very mild conditions which are usually satisfied in
real world applications. Unfortunately, explaining more details about
those results goes beyond the scope of this tutorial, so we refer the
interested reader to \citeA{LuxBelBou_annofstats}.\\

In contrast to the clear convergence statements for normalized
spectral clustering, the situation for unnormalized spectral
clustering is much more unpleasant.  It can be proved that
unnormalized spectral clustering can fail to converge, or that it can
converge to trivial solutions which construct clusters consisting of
one single point of the data space \cite{LuxBouBel04_nips,LuxBelBou_annofstats}. 
Mathematically, even though one can prove that the
matrix $(1/n)\Lun$ itself converges to some limit operator $T$ on
$C(\Xcal)$ as $n \to \infty$, the spectral properties of this limit
operator $T$ can be so nasty that they prevent the convergence of
spectral clustering. It is possible to construct examples
which show that this is not only a problem for very large sample size,
but that it can lead to completely unreliable results even for small sample
size. At least it is possible to characterize the conditions when
those problem do not occur: We have to make sure that the eigenvalues
of $\Lun$ corresponding to the eigenvectors used in unnormalized
spectral clustering are significantly smaller than the minimal degree in the
graph. This means that if we use the first $k$ eigenvectors for
clustering, then $\lambda_i \ll \min_{j=1,\hdots,n} d_j$ should hold
for all $i=1,\hdots,k$.  The mathematical reason for this condition is
that eigenvectors corresponding to eigenvalues larger than $\min d_j$
approximate Dirac functions, that is they are approximately 0 in all
but one coordinate. If those eigenvectors are used for clustering,
then they separate the one vertex where the eigenvector is non-zero
from all other vertices, and we clearly do not want to construct such
a partition. Again we refer to the literature for precise statements
and proofs.\\

For an illustration of this phenomenon, consider again our toy data set
from Section \ref{sec-algorithms}.  We consider the first eigenvalues
and eigenvectors of the unnormalized graph Laplacian based on the
fully connected graph, for different choices of the parameter $\sigma$
of the Gaussian similarity function (see last row of Figure
\ref{fig-example-spectral} and all rows of Figure \ref{fig-counterex}).  The
eigenvalues above $\min d_j$ are plotted as blue stars, the
eigenvalues below $\min d_j$ are plotted as red diamonds.  The dashed
line indicates $\min d_j$.
In general, we can see that the eigenvectors corresponding to
eigenvalues which are much below the dashed lines are ``useful''
eigenvectors.  In case $\sigma=1$ (plotted already in the last row of
Figure \ref{fig-example-spectral}), Eigenvalues 2, 3 and 4 are
significantly below $\min d_j$, and the corresponding Eigenvectors 2,
3, and 4 are meaningful (as already discussed in Section
\ref{sec-algorithms}).  If we increase the parameter $\sigma$, we can
observe that the eigenvalues tend to move towards $\min d_j$. In case
$\sigma=2$, only the first three eigenvalues are below $\min d_j$
(first row in Figure \ref{fig-counterex}), and in case $\sigma=5$ only
the first two eigenvalues are below $\min d_j$ (second row in Figure
\ref{fig-counterex}). We can see that as soon as an eigenvalue gets
close to or above $\min d_j$, its corresponding eigenvector
approximates a Dirac function. Of course, those eigenvectors are
unsuitable for constructing a clustering. In the limit for $n \to
\infty$, those eigenvectors would converge to perfect Dirac
functions. Our illustration of the finite sample case shows that this
behavior not only occurs for large sample size, but can be generated
even on the small example in our toy data set.\\

It is very important to stress that those problems only concern the
eigenvectors of the matrix $\Lun$, and they do not occur for $\Lrw$ or
$\Lsym$. Thus, from a statistical point of view, it is preferable to
avoid unnormalized spectral clustering and to use the normalized
algorithms instead.\\

\subsubsection*{Which normalized Laplacian?}

Looking at the differences between the two normalized spectral
clustering algorithms using $\Lrw$ and $\Lsym$, all three explanations
of spectral clustering are in favor of $\Lrw$.  The reason is that the
eigenvectors of $\Lrw$ are cluster indicator vectors $\charfct_{A_i}$,
while the eigenvectors of $\Lsym$ are additionally multiplied with
$D^{1/2 }$, which might lead to undesired artifacts. %
As using $\Lsym$ also
does not have any computational advantages, we thus advocate for using
$\Lrw$. \\

\section{Outlook and further reading} \label{sec-extensions}

Spectral clustering goes back to \citeA{DonHof73}, who first suggested
to construct graph partitions based on eigenvectors of the adjacency
matrix. In the same year, \citeA{Fiedler73} discovered that
bi-partitions of a graph are closely connected with the second
eigenvector of the graph Laplacian, and he suggested to use this
eigenvector to partition a graph. Since then, spectral clustering has
been discovered, re-discovered, and extended many times in different
communities, see for example 
\citeA{PotSimLio90},
\citeA{Simon91},
\citeA{Bolla91},
\citeA{HagKah92},  %
\citeA{HenLel95},  %
\citeA{DriRoo95},  %
\citeA{BarPotSim95},
\citeA{SpiTen96},
\citeA{GuaMil98}. 
A nice overview over the history of spectral clustering can be found
in \citeA{SpiTen96}. \\

In the machine learning community, spectral clustering has been made popular
by the works of \citeA{ShiMal00},
\citeA{NgJorWei02}, 
\citeA{MeiShi01}, and \citeA{Ding04_tutorial}. Subsequently, spectral clustering has been extended to many non-standard settings, for example 
spectral clustering applied to the co-clustering problem
\cite{Dhillon01}, 
spectral clustering with additional side information \cite{Joachims03}
connections between spectral clustering and the weighted kernel-$k$-means
algorithm \cite{DhiGuaKul05}, 
learning similarity functions based on spectral clustering 
\cite{BachJordan04}, 
or spectral clustering in a distributed environment
\cite{KemMcs04}. 
Also, new theoretical insights about the relation of spectral
clustering to other algorithms have been found. A link between
spectral clustering and the weighted kernel $k$-means algorithm is
described in \citeA{DhiGuaKul05}. Relations between spectral clustering
and (kernel) principal component analysis rely on the fact that the
smallest eigenvectors of graph Laplacians can also be interpreted as
the largest eigenvectors of kernel matrices (Gram matrices). Two
different flavors of this interpretation exist: while
\citeA{BenDelRouEtal04}
 interpret the matrix $D^{-1/2}W D^{-1/2}$ as
kernel matrix, other authors \cite{SaeFouYenDup04} interpret the
Moore-Penrose inverses of $\Lun$ or $\Lsym$ as kernel matrix. Both
interpretations can be used to construct (different) out-of-sample
extensions for spectral clustering. Concerning application cases of
spectral clustering, in the last few years such a huge number of
papers has been published in various scientific areas that it is
impossible to cite all of them. We encourage the reader to query his
favorite literature data base with the phrase ``spectral clustering''
to get an impression no the variety of applications.\\

The success of spectral clustering is mainly based on the fact that it
does not make strong assumptions on the form of the clusters. As
opposed to $k$-means, where the resulting clusters form convex
sets (or, to be precise, lie in disjoint convex sets of the underlying space), spectral
clustering can solve very general problems like intertwined
spirals. Moreover, spectral clustering can be implemented efficiently
even for large data sets, as long as we make sure that the similarity
graph is sparse. Once the similarity graph is chosen, we just have to
solve a linear problem, and there are no issues of getting stuck in
local minima or restarting the algorithm for several times with
different initializations. However, we have already mentioned that
choosing a good similarity graph is not trivial, and spectral
clustering can be quite unstable under different choices of the
parameters for the neighborhood graphs.
So spectral clustering cannot serve as a ``black box algorithm'' which
automatically detects the correct clusters in any given data set. But
it can be considered as a powerful tool which can produce 
good results if applied with care.\\

In the field of machine learning, graph Laplacians are not only used
for clustering, but also emerge for many other tasks such as
semi-supervised learning (e.g., \citeNP{ChaZieSch06} for an overview)
or manifold reconstruction (e.g., \citeNP{BelNiy03eigenmaps}). In most
applications, graph Laplacians are used to encode the assumption that
data points which are ``close'' (i.e., $w_{ij}$ is large) should have
a ``similar'' label (i.e., $f_i \approx f_j$). A function $f$
satisfies this assumption if $w_{ij}(f_i - f_j)^2$ is small for all
$i,j$, that is $f'\Lun f$ is small. With this intuition one can use
the quadratic form $f'\Lun f$ as a regularizer in a transductive
classification problem. One other way to interpret the use of graph
Laplacians is by the smoothness assumptions they encode.  A function
$f$ which has a low value of $f'\Lun f$ has the property that it
varies only ``a little bit'' in regions where the data points lie
dense (i.e., the graph is tightly connected), whereas it is allowed to
vary more (e.g., to change the sign) in regions of low data density.
In this sense, a small value of $f'\Lun f$ encodes the so called
``cluster assumption'' in semi-supervised learning, which requests
that the decision boundary of a
classifier should lie in a region of low density. \\

An intuition often used is that graph Laplacians formally look like a
continuous Laplace operator (and this is also where the name ``graph
Laplacian'' comes from). To see this, transform a local similarity
$w_{ij}$ to a distance $d_{ij}$ by the relationship $w_{ij} = 1 /
d_{ij}^2$ and observe that \ba w_{ij}(f_i - f_j)^2 \approx \left(
  \frac{f_i - f_j}{d_{ij}} \right)^2 \ea
looks like a difference quotient. As a consequence, the equation $f'Lf
= \sum_{ij} w_{ij} (f_i - f_j)^2$ from Proposition \ref{prop-unnorm}
looks like a discrete version of the quadratic form associated to the
standard Laplace operator $\Lcal$ on $\R^n$, which satisfies \ba
\langle g, \Lcal g \rangle = \int | \nabla g |^2 dx.  \ea
This intuition has been made precise in the works of \citeA{Belkin03}, \citeA{Lafon04},
\citeA{HeiAudLux05,HeiAudLux07}, \citeA{BelNiy05}, \citeA{Hein06},
\citeA{GinKol06}.
In general, it is proved that graph Laplacians are
discrete versions of certain continuous Laplace operators, and that if
the graph Laplacian is constructed on a similarity graph of randomly
sampled data points, then it converges to some continuous Laplace
operator (or Laplace-Beltrami operator) on the underlying space.
\citeA{Belkin03} studied the first important step of the convergence
proof, which deals with the convergence of a continuous operator
related to discrete graph Laplacians to the Laplace-Beltrami
operator. %
His results were generalized from uniform distributions to general
distributions by \citeA{Lafon04}. Then in \citeA{BelNiy05}, the
authors prove pointwise convergence results for the unnormalized graph
Laplacian using the Gaussian similarity function on manifolds with
uniform distribution. At the same time, \citeA{HeiAudLux05} prove more
general results, taking into account all different graph Laplacians
$L$, $\Lrw$, and $\Lsym$, more general similarity functions, and
manifolds with arbitrary distributions. In \citeA{GinKol06},
distributional and uniform convergence results are proved on manifolds
with uniform distribution.  \citeA{Hein06} studies the convergence of
the smoothness functional induced by the graph
Laplacians and shows uniform convergence results. \\

Apart from applications of graph Laplacians to partitioning problems
in the widest sense, graph Laplacians can also be used for completely
different purposes, for example for graph drawing \cite{Koren05}. In
fact, there are many more tight connections between the topology and
properties of graphs and the graph Laplacian matrices than we have
mentioned in this tutorial. Now equipped with an understanding for the
most basic properties, the interested reader is invited to further
explore and enjoy the huge literature in this field on his own.

\bibliography{general_bib,ules_publications}
\end{document}